\newcommand{\ootimes}{\otimes \cdots \otimes}
\newcommand{\ttimes}{\times \cdots \times}
\newcommand{\trace}{\mathrm{trace}}
\newcommand{\textsum}{{\textstyle \sum}}
\renewcommand{\theta}{\vartheta}
\newcommand{\Id}{\mathrm{Id}}
\newcommand{\rank}{\mathrm{rank}}
\newcommand{\RR}{\mathbb{R}}
\newcommand{\CC}{\mathbb{C}}
\newcommand{\QQ}{\mathbb{Q}}
\newcommand{\ZZ}{\mathbb{Z}}
\newcommand{\aaa}{\mathbf{a}}
\newcommand{\bbb}{\mathbf{b}}
\newcommand{\ccc}{\mathbf{c}}
\newcommand{\nnn}{\mathbf{n}}
\newcommand{\mmm}{\mathbf{m}}
\renewcommand{\lll}{\mathbf{l}}
\newcommand{\fg}{\mathfrak{g}}
\newcommand{\FS}{\mathfrak{S}}
\newcommand{\frakS}{\mathfrak{S}}
\newcommand{\calM}{\mathcal{M}}
\newcommand{\Mone}{M_{\langle 1\rangle}}
\newcommand{\ur}{\underline{\mathbf{R}}}
\newcommand{\uR}{\underline{\mathbf{R}}}
\newcommand{\uQ}{\underline{\mathbf{Q}}}
\newcommand{\supp}{\mathrm{supp}}
\newcommand{\aR}{\uwave{\mathbf{R}}}
\newcommand{\aQ}{\uwave{\mathbf{Q}}}
\newcommand{\ww}{\wedge}
\newcommand{\tdet}{\operatorname{det}}
\newcommand{\eps}{\varepsilon}
\newcommand{\link}{\url{http://fulges.github.io/code/CGLV/index.html}}
\author[A. Conner, F. Gesmundo, J. M. Landsberg, E. Ventura]{Austin Conner, Fulvio Gesmundo, Joseph M. Landsberg, Emanuele Ventura}
\address[A. Conner]{Department of Mathematics, Harvard University, Cambridge, MA 02138, USA -- (previous) Department of Mathematics, Texas A\&M University, College Station, TX 77843-3368, USA}
\email[A. Conner]{connerad@math.tamu.edu}
\address[J. M. Landsberg]{Department of Mathematics, Texas A\&M University, College Station, TX 77843-3368, USA}
\email[J. M. Landsberg]{jml@math.tamu.edu}
\address[F. Gesmundo]{Max Planck Institute for Mathematics in the Sciences, Inselstrasse 22, 04103 Leipzig, Germany -- 
(previous) QMATH, Dept. Math. Sciences, University of Copenhagen, Universitetsparken 5, 2100 Copenhagen O., Denmark}
\email{fulvio.gesmundo@mis.mpg.de}
\address[E. Ventura]{Universit\`a di Torino, Dipartimento di Matematica, via Carlo Alberto 10, 10123 Torino, Italy -- (previous) Mathematisches Institut, Universit\"at Bern, Sidlerstrasse 5, 3012, Bern, Switzerland}
\email{emanuele.ventura@unito.it}
\title[Kronecker powers of tensors]{Rank and border rank of Kronecker powers of tensors and Strassen's laser method}
\keywords{Matrix multiplication complexity, Tensor rank, Asymptotic rank, Laser method}
\subjclass[2010]{68Q17; 14L30, 15A69}
\renewcommand{\a}{\alpha}
\renewcommand{\b}{\beta}
\newcommand{\trank}{\mathrm{rank}}
\newcommand{\tdim}{\mathrm{dim}}
\newcommand{\ra}{\to}
\renewcommand{\bar}[1]{\overline{#1}}
\newcommand{\BC}{\mathbb{C}}
\newcommand{\vvirg}{, \dots ,}
\newcommand{\ot}{\otimes}
\newcommand{\Hom}{\mathrm{Hom}}
\newcommand{\perm}{\mathrm{perm}}
\newcommand{\Mat}{\mathrm{Mat}}
\renewcommand{\det}{\mathrm{det}}
\newcommand{\bbC}{\mathbb{C}}
\newcommand{\bbN}{\mathbb{N}}
\newcommand{\bbQ}{\mathbb{Q}}
\newcommand{\bfb}{\mathbf{b}}
\newcommand{\bfl}{\mathbf{l}}
\newcommand{\bfm}{\mathbf{m}}
\newcommand{\bfn}{\mathbf{n}}
\newcommand{\calS}{\mathcal{S}}
\newcommand{\calT}{\mathcal{T}}
\newcommand{\bfR}{\mathbf{R}}
\newcommand{\frakgl}{\mathfrak{gl}}
\newcommand{\frakg}{\mathfrak{g}}
\newcommand{\triv}{\mathrm{triv}}
\newtheorem{theorem}{Theorem}[section]
\newtheorem{corollary}[theorem]{Corollary}
\newtheorem{lemma}[theorem]{Lemma}
\newtheorem{proposition}[theorem]{Proposition}
\newtheorem{theorem*}[theorem]{Theorem*}
\newtheorem{problem}[theorem]{Problem}
\theoremstyle{definition}
\newtheorem{remark}[theorem]{Remark}
\newtheorem{example}[theorem]{Example}
\newcommand{\bbT}{\mathbb{T}}
\newcommand{\bbZ}{\mathbb{Z}}\newcommand{\BZ}{\mathbb{Z}}
\newcommand{\tperm}{\mathrm{perm}}
\begin{document}
\begin{abstract} 

We prove that the border rank of the Kronecker square of the little Coppersmith-Winograd tensor $T_{cw,q}$ is the square of its border rank for $q > 2$ and that the border rank of its Kronecker cube is the cube of its border rank for $q > 4$. This answers questions raised implicitly in \cite[\S 11]{copwin135} and explicitly in \cite[Problem 9.8]{gs005} and rules out the possibility of proving new upper bounds on the exponent of matrix multiplication using the square or cube of a little Coppersmith-Winograd tensor in this range.

In the positive direction, we enlarge the list of explicit tensors potentially useful for Strassen's laser method, introducing a skew-symmetric version of the Coppersmith-Winograd tensor, $T_{skewcw,q}$. For $q = 2$, the Kronecker square of this tensor coincides with the $3\times 3$ determinant polynomial, $\det_3 \in \BC^9\ot \BC^9\ot \BC^9$, regarded as a tensor. We show that this tensor could potentially be used to show that the exponent of matrix multiplication is two.

We determine new upper bounds for the (Waring) rank and the (Waring) border rank of $\det_3$, exhibiting a strict submultiplicative behaviour for $T_{skewcw,2}$ which is promising for the laser method. 

We establish general results regarding border ranks of Kronecker powers of tensors, and make a detailed study of Kronecker squares of tensors in $\BC^3\ot \BC^3\ot \BC^3$.
\end{abstract}
\maketitle

\section{Introduction}

The exponent $\omega$ of matrix multiplication is defined as
\[
\omega :=\inf \{\tau \mid  \text{ two } \nnn \times \nnn \text{ matrices may be multiplied using } O(\nnn^\tau) \text{ arithmetic operations}\}.
 \]
This is a fundamental constant governing the complexity of the basic operations in linear algebra. It is conjectured that $\omega = 2$. There is a classical upper bound $\omega \leq 3$ following from the standard row-by-column multiplication. Starting from 1969 \cite{Strassen493}, a great deal of effort has been spent on the research on upper bounds on the exponent, involving methods from combinatorics, probability, and statistical mechanics; we refer to Section \ref{hist} for a brief history. The more recent Cohn-Umans approach \cite{CU} uses group-theoretic techniques and in particular the Fourier-transform of finite groups. In this work, we approach the problem via algebraic geometry and representation theory. We obtain both negative and hopeful results.

Our focus will be on Strassen's \emph{laser method} \cite{MR882307}. This technique was used to achieve Strassen's upper bound of 1988 and essentially all subsequent upper bounds. In order to present the method and our contributions, we adopt the language of tensors.

\subsection{Definitions}
Let $A,B,C$ be complex vector spaces. A tensor $T\in  A \ot B\ot C$ has {\it rank one} if $T=a\ot b\ot c$ for some $a\in A$, $b\in B$, $c\in C$. The {\it rank} of $T$, denoted $\bfR(T)$, is the  smallest $r$ such that $T$ is sum of $r$ rank-one tensors.  The {\it border rank} of $T$, denoted $\ur(T)$, is the smallest $r$ such that $T$ is the limit of a sequence of rank $r$ tensors.

A tensor $T \in A \otimes B \otimes C$ defines a bilinear map $A^* \times B^* \to C$ and a trilinear map $A^* \times B^* \times C^* \to \bbC$. The matrix multiplication tensor $M_{\bfl,\bfm,\bfn}$ is the tensor associated to the bilinear map
\[
 M_{\langle \bfl,\bfm,\bfn \rangle} : \Mat_{\bfl \times \bfm} \times \Mat_{\bfm \times \bfn} \to \Mat_{\bfl \times \bfn}
\]
sending a pair of matrices $(X,Y)$ to their product $XY$. As a trilinear map, the matrix multiplication tensor is $M_{\langle \lll,\mmm,\nnn \rangle}(X,Y,Z)=\trace(XYZ)$, where $X,Y,Z$ are matrices of size $\lll\times \mmm$, $\mmm\times \nnn$ and $\nnn\times\lll$, respectively. The matrix multiplication tensor has the following important self-reproducing property: $M_{\langle \lll,\mmm,\nnn \rangle} \boxtimes M_{\langle \lll',\mmm'\nnn' \rangle} = M_{\langle \lll\lll',\mmm\mmm',\nnn\nnn'\rangle}$. Write $M_{\langle \bfn \rangle} :=M_{\langle \bfn , \bfn, \bfn\rangle}$.

The complexity of performing a bilinear map, and in particular the complexity of matrix multiplication, is controlled by the tensor rank of the corresponding tensor. Bini \cite{MR605920} showed that border rank controls the complexity as well:
\[
\omega = \inf \{ \tau : \uR(M_{\langle \bfn \rangle}) \in O(\bfn ^\tau)\}. 
\]

Let $GL(A)$ be the general linear group of invertible linear maps $A \to A$ and similarly for $B$ and $C$.  We say that two tensors are \emph{isomorphic} if they are in the same orbit under the 
natural action of $GL(A) \times GL(B) \times GL(C)$ on $A\ot B\ot C$. We will often assume that all tensors involved in the discussion belong to the same space $A \otimes B \otimes C$. This is not restrictive, since we may re-embed the spaces $A,B,C$ into larger spaces whenever it is needed.

Given $T,T' \in A \otimes B \otimes C$, we say that $T$ \emph{degenerates} to $T'$ if $T' \in \bar{GL(A) \times GL(B) \times GL(C) \cdot T}$, the closure of the orbit of $T$, equivalently in the Euclidean or in the Zariski topology. Border rank is semicontinuous under degeneration: $\uR(T' ) \leq \uR(T)$ if $T$ degenerates to $T'$.

Border rank may be rephrased in terms of degeneration as follows. For a tensor $T$, one has $\uR(T) \leq r$ if and only if $T$ is a degeneration of $\Mone^{\oplus r} = \sum_{i=1}^r a_i \otimes b_i \otimes c_i$, where $\{a_i\}$ is a set of linearly independent vectors and similarly for $\{b_i\}$ and $\{c_i\}$. The \emph{border subrank} of $T$, denoted $\uQ(T)$, is the maximum $q$ such that $T$ degenerates to $\Mone^{\oplus q}$. 

For tensors $T\in A\ot B\ot C$ and $T'\in A'\ot B'\ot C'$, the {\it Kronecker product} of $T$ and $T'$ is the tensor $T\boxtimes T' := T \ot T' \in (A\ot A')\ot (B\ot B')\ot (C\ot C')$, regarded as $3$-way tensor. Given $T \in A \otimes B \otimes C$, the {\it Kronecker powers}  of $T$ are $T^{\boxtimes N} \in A^{\otimes N} \otimes B^{\otimes N} \otimes C^{\otimes N}$, defined iteratively. Rank and border rank are submultiplicative under Kronecker product: $\bfR(T \boxtimes T') \leq \bfR(T) \bfR(T')$, $\uR(T \boxtimes T') \leq \uR(T) \uR(T')$, and both inequalities may be strict.  

Asymptotic versions of border rank and border subrank,  respectively called \emph{asymptotic rank} and \emph{asymptotic subrank},  are defined as follows:
\[
\aR(T) = \lim_{N \to \infty} [\uR(T^{\boxtimes N})] ^{1/N}, \quad \aQ(T) = \lim_{N \to \infty} [\uQ(T^{\boxtimes N})] ^{1/N} .
\]

One has $\omega = \log_2 (\aR(M_{\langle 2 \rangle}) $; in particular $\omega = 2$ if and only if $\aR(M_{\langle \bfn \rangle}) = \bfn^2$ for any (and as a consequence all) $\bfn$. 

\subsection{Strassen's laser method and its barriers}
The two fundamental ingredients of Strassen's laser method are submultiplicativity of border rank under Kronecker powers and semicontinuity of border rank under degeneration. The laser method relies on an auxiliary tensor $T$ with the property that $\uR(T)$ is \emph{small} and, for some large $N$, $T^{\boxtimes N}$ degenerates to a \emph{large} matrix multiplication tensor. 

Since 1987, only three  tensors have been employed in the method and the     best upper bounds so far 
come from the \emph{big Coppersmith-Winograd  tensor} \cite{copwin135}: 
\begin{align*}\label{bigcw} 
T_{CW,q}:= &\sum_{j=1}^q
a_0\ot b_j\ot c_j+a_j\ot b_0\ot c_j + a_j\ot b_j\ot c_0+\\ 
&a_0\ot b_0\ot c_{q+1}+ a_0\ot b_{q+1}\ot c_0+a_{q+1}\ot b_0\ot c_0 
\in (\BC^{q+2})^{\ot 3},
\end{align*}
It was used to prove $\omega < 2.38$ in 1988 and all further improvements to the current best known upper bound $\omega < 2.373$. 

In 2014 \cite{MR3388238} gave an explanation for the limited progress since 1988, followed by further explanations in \cite{DBLP:conf/innovations/AlmanW18,2018arXiv181008671A, DBLP:journals/corr/abs-1812-06952,DBLP:journals/corr/abs-1812-08731}. One major consequence of these results is that $T_{CW,q}$ cannot be used to prove $\omega <2.3$ using the standard laser method. 

A geometric identification of the barrier of \cite{MR3388238} was given in \cite{DBLP:journals/corr/abs-1812-06952}. Strassen showed $\uQ( M_{\langle \bfn \rangle} ) \geq  \lceil \frac{3}{4} n^2 \rceil$ (in \cite[Theorem 3]{KopMosZui:GeomRankSubrankMaMu} equality was proved). This, together with the self-reproducing property of the matrix multiplication tensor, implies $\aQ(M_{\langle \bfn \rangle} ) = \bfn^2$, which is the maximum possible value. A consequence is that no tensor having non-maximal asymptotic subrank can be used to prove $\omega=2$ via the laser method; in \cite{MR1089800} it was shown that $\aQ(T_{CW,q})$  is non-maximal. 

The second most effective tensor used for upper bounds via Strassen's laser method is the \emph{small Coppersmith-Winograd  tensor}:
\begin{equation}\label{littlecw}
T_{cw,q}:=\sum_{j=1}^q
a_0\ot b_j\ot c_j+a_j\ot b_0\ot c_j + a_j\ot b_j\ot c_0\in (\BC^{q+1})^{\ot 3}.
\end{equation}

In \cite{copwin135}, the laser method was used to (implicitly) prove the following result:
\begin{theorem}[Coppersmith-Winograd \cite{copwin135}]\label{cwbndk} 
 For all $k$ and $q$,
\begin{equation}\label{cwbndq}
\omega\leq  \log_q(\frac{4}{27}(\uR(T_{cw,q}^{\boxtimes k}))^{\frac{3}{k}}) .
\end{equation}  
\end{theorem}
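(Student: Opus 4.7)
The plan is to use Strassen's laser method in the Coppersmith-Winograd formulation. First, I would identify the block structure of $T_{cw,q}$: writing $A = A_0 \oplus A_1$ with $\dim A_0 = 1$ and $\dim A_1 = q$ (and analogously for $B$ and $C$), the tensor $T_{cw,q}$ decomposes into three blocks
\[
T_X \in A_0 \otimes B_1 \otimes C_1,\qquad T_Y \in A_1 \otimes B_0 \otimes C_1,\qquad T_Z \in A_1 \otimes B_1 \otimes C_0,
\]
each isomorphic (as a matrix multiplication tensor) to one of $M_{\langle 1, 1, q\rangle}$, $M_{\langle q, 1, 1\rangle}$, $M_{\langle 1, q, 1\rangle}$ respectively. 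The Kronecker power $T_{cw,q}^{\boxtimes k}$ inherits a block decomposition indexed by words $w\in\{X,Y,Z\}^k$, and a word with $\alpha$ copies of $X$, $\beta$ of $Y$, $\gamma$ of $Z$ (so $\alpha+\beta+\gamma=k$) gives the matrix multiplication block $M_{\langle q^\beta, q^\gamma, q^\alpha\rangle}$, whose Sch\"onhage value is uniformly $q^{k\omega/3}$.

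Next, the key step: I would pass to an auxiliary high Kronecker power $T_{cw,q}^{\boxtimes kN}$ and, via a Salem-Spencer-style combinatorial selection combined with Strassen's zeroing-out degeneration, extract from $T_{cw,q}^{\boxtimes kN}$ a direct-sum decomposition (up to degeneration) into roughly $(27/4)^{kN/3}$ independent copies of the cubic matrix multiplication $M_{\langle q^{kN/3}\rangle}$. The characteristic density $(27/4)^{1/3}$ per unit length emerges from optimizing a probability distribution on block types in the laser method's entropic objective, balanced against the combinatorial constraints imposed by the three-block triangle structure of $T_{cw,q}$.

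With the extraction in hand, Sch\"onhage's asymptotic sum inequality (applied to the degenerated direct sum), together with sub-multiplicativity of border rank under Kronecker product, yields
\[
(27/4)^{kN/3} \cdot q^{kN\omega/3} \;\leq\; \uR(T_{cw,q}^{\boxtimes kN}) \;\leq\; \uR(T_{cw,q}^{\boxtimes k})^N.
\]
Taking $N$-th roots, passing to the limit $N\to\infty$ (so the sub-exponential corrections from the Salem-Spencer density vanish), and raising to the $3/k$-th power gives $(27/4)\, q^\omega \leq \uR(T_{cw,q}^{\boxtimes k})^{3/k}$, which is exactly the stated inequality after taking $\log_q$.

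The hard part will be the combinatorial extraction step that realizes the $(27/4)^{kN/3}$ independent matrix multiplications inside $T_{cw,q}^{\boxtimes kN}$, whose $A$-, $B$-, and $C$-supports must be simultaneously disjoint for Sch\"onhage's inequality to apply. The sharpness of the constant $(27/4)^{1/3}$ reflects a tight interplay between the multinomial entropy $H(1/3,1/3,1/3) = \log 3$ and the geometric constraints imposed by the support $\{(0,1,1),(1,0,1),(1,1,0)\}$ of $T_{cw,q}$ in $\{0,1\}^3$.
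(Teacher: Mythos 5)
Your sketch is correct and follows the standard laser-method route; note, however, that the paper gives no proof of this theorem---it attributes the ingredients to \cite{copwin135} and points to \cite[Ex.\ 15.24]{BCS} and \cite[Thm.\ 5.1.5.1]{LCBMS} for the statement and argument. Your block decomposition ($T_X,T_Y,T_Z$ as $M_{\langle 1,1,q\rangle}$, $M_{\langle q,1,1\rangle}$, $M_{\langle 1,q,1\rangle}$), the identification of each length-$k$ block as $M_{\langle q^\beta,q^\gamma,q^\alpha\rangle}$ with value $q^{k\omega/3}$, the Salem--Spencer extraction of $\binom{kN}{kN/3}\approx (27/4)^{kN/3}$ independent cubic matrix multiplications from $T_{cw,q}^{\boxtimes kN}$, and the application of Sch\"onhage's asymptotic sum inequality followed by the $N\to\infty$ limit to absorb the sub-exponential loss, all match the argument in those sources and correctly recover \eqref{cwbndq}.
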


One has $\ur(T_{cw,q})=q+2$, which is one more than minimal (see, e.g., \cite[Sec. 15.8]{BCS}). Applying Theorem \ref{cwbndk} to $T_{cw,8}$ with $k=1$ one obtains $\omega \leq 2.41$ \cite{copwin135}.  Theorem  \ref{cwbndk} implies that if the border rank of the  Kronecker square or some higher Kronecker power of $T_{cw,q}$ were strictly submultiplicative, one could get a better bound, and one could even potentially prove $\omega =2$ using Kronecker powers of  $T_{cw,2}$. Indeed, \cite[Ex. 15.24]{BCS} observes that Theorem \ref{cwbndk} holds replacing $\uR(T_{cw,q}^{\boxtimes k})^{\frac{1}{k}}$ with $\aR(T_{cw,q})$. In particular, were $\aR(T_{cw,2}) = 3$, then Theorem \ref{cwbndk} would imply $\omega =2$. This shows that the barriers of \cite{MR3388238,DBLP:conf/innovations/AlmanW18,2018arXiv181008671A, DBLP:journals/corr/abs-1812-06952} do not apply to $T_{cw,2}$. Previous to our work, the possibility to prove the upper bound $\omega < 2.3$ using the second and third Kronecker power of $T_{cw,q}$ for $3 \leq q \leq 10$ was open, in the sense that the if the state of art lower bound on $T_{cw,q}^{\boxtimes k}$ were equal to an upper bound, then Theorem \ref{cwbndk} would have given an improvement. We show that this is not the case.

\subsection{Main results}

M. Bl\"aser  \cite[Problem 9.8]{gs005}  posed the problem of determining the border rank of $T_{cw,q}^{\boxtimes 2}$. We show:
\begin{theorem}\label{thm: lower bounds cw powers}~\\
For all $q>2$, $\ur(T_{cw,q}^{\boxtimes 2})=(q+2)^2$; moreover $15  \leq \ur(T_{cw,2}^{\boxtimes 2})\leq 16$.

For all $q > 4$, $\uR(T_{cw,q}^{\boxtimes 3}) = (q+2)^3$; if $q = 3,4$ then $\ur(T_{cw,q}^{\boxtimes 3})\geq (q+2)^2 (q+1)$; if $q=2$, then $\ur(T_{cw,2}^{\boxtimes 3})\geq 15 \cdot 3$.

For all $q>4$ and all $N$, $\ur(T_{cw,q}^{\boxtimes N})\geq (q+1)^{N-3}(q+2)^3$; if $q = 3,4$ then $\ur(T_{cw,q}^{\boxtimes N})\geq (q+2)^2 \cdot (q+1)^{N-2}$; if $q = 2$ then $\ur(T_{cw,2}^{\boxtimes 3})\geq 15 \cdot 3^{N-2}$.
\end{theorem}
This improves on the previous lower bound from \cite{MR3578455}, which was $\ur(T_{cw,q}^{\boxtimes N})\geq (q+1)^N + 2^N-1$ for all $q,N$.

This result shows that the second and third Kronecker powers of $T_{cw,q}$ cannot give any improvement on the current upper bounds on the exponent. For instance, the lower bound of \cite{MR3578455} for $(q,N) = (3,3)$ is $\ur(T_{cw,3}^{\boxtimes 3}) \geq 71$; if this had been the value of $\uR(T_{cw,3}^{\boxtimes 3})$ then Theorem \ref{cwbndk} would have given $\omega < 2.15$; however, the lower bound of Theorem \ref{thm: lower bounds cw powers} guarantees $\uR(T_{cw,3}^{\boxtimes 3}) \geq 100$, and even if this turns out to be the value of $\uR(T_{cw,3}^{\boxtimes 3})$, Theorem \ref{thm: lower bounds cw powers} only gives $\omega < 2.46$.

In light of the above-mentioned barriers and Theorem \ref{thm: lower bounds cw powers}, one might try to determine better tensors which are not subject  to the barriers  (similarly to $T_{cw,q}$) and at the same time have strict submultiplicativity of border rank under Kronecker powers.

Inspired by \cite{2019arXiv190909518C}, we introduce a new family of tensors, which are a skew-symmetric version of the small Coppersmith-Winograd tensors for every even $q$:
\begin{equation}\label{littleskcw} 
T_{skewcw,q}:=\sum_{j=1}^q
a_0\ot b_j\ot c_j+a_j\ot b_0\ot c_j + \sum_{\xi =1}^{\frac{q}{2}} (a_\xi\ot b_{\xi+\frac{q}{2}}- a_{\xi+\frac{q}{2}}\ot b_\xi)\ot c_0\in (\BC^{q+1})^{\ot 3}. 
\end{equation}
Proposition \ref{skewcwbnd} shows Theorem \ref{cwbndk} holds with $T_{cw,q}$ is replaced by $T_{skewcw,q}$, so in particular $T_{skewcw,2}$ could potentially be used to prove $\omega = 2$. 

 Proposition \ref{skewbad} contains more negative news: $\uR(T_{skewcw,q}) \geq q+3$, and in particular $\uR(T_{skewcw,2}) = 5$. However, we show a strong submultiplicative behaviour for $T_{skewcw,q}$, namely $\uR(T_{skewcw,2}^{\boxtimes 2}) \leq 17<5^2$.   Theorem \ref{thm: waring rank and border rank det3} below actually proves a stronger statement. We show in Lemma \ref{pdlem} that $T_{skewcw,2}^{\boxtimes 2}$ is isomorphic to the $3 \times 3$ determinant polynomial regarded as a tensor and we prove new upper bounds for the symmetric rank (also known as Waring rank, see, e.g., \cite[\S2.6.6]{MR2865915}) and symmetric border rank of the $3 \times 3$ determinant polynomial.
 
\begin{theorem} \label{thm: waring rank and border rank det3}
Let $\det_3 \in \bbC^9 \otimes \bbC^9 \otimes \bbC^9$ be the $3 \times 3$ determinant polynomial regarded as a symmetric tensor. Then 
\[
\bfR_S(\det_3) \leq 18, \quad  \text{and} \quad \uR_S(\det_3) \leq 17. 
\]
\end{theorem}
In \cite{CHLbapolar}, it was shown that $\uR(\det_3) = 17$ and in particular the second inequality in Theorem \ref{thm: waring rank and border rank det3} is an equality.

The proof of Theorem \ref{thm: lower bounds cw powers} is given in Section \ref{kossect} and the proof of Theorem \ref{thm: waring rank and border rank det3} is given in Section \ref{section: determinant upper bounds}.

Some of the proofs of this work rely on computer calculations performed by the software Macaulay2 \cite{M2} and Sage \cite{sagemath}. The scripts performing these calculations are collected in different appendices in the Supplementary Material available at 
\begin{center}
 \link
\end{center}

\subsection{Brief history of upper bounds}\label{hist}
There was steady progress in the research for upper bounds on $\omega$  from 1968 to 1988. 

In \cite{Strassen493}, Strassen gave an algorithm to perform $2\times 2$ matrix multiplication using $7$ scalar multiplications (rather than the $8$ multiplications of the standard algorithm). This gives the upper bound $\omega< 2.81$. In \cite{MR592760}, Bini et. al., using border rank,  showed $\omega<2.78$. 

A major breakthrough due to Sch\"onhage \cite{MR623057}, known as the asymptotic sum inequality, was used to show $\omega<2.55$ by exploiting the interplay between direct sums and the self-reproducing property of the matrix multiplication tensor. In \cite{MR882307} Strassen introduced the \emph{laser method} and showed $\omega<2.48$. A refined form of the laser method was used by Coppersmith and Winograd to show $\omega<2.3755$ \cite{copwin135}.

There was no progress on upper bounds on the exponent until 2011 when, via a further refinement of the method, a series of improvements by Stothers, Williams, Le Gall and Alman and Williams and \cite{stothers,williams,LeGall:2014:PTF:2608628.2608664,AlmWil:RefinedLaserMethodFMM} lowered the upper bound to the current state of the art  $\omega < 2.373$.

\section{Preliminary results}

In this section, we provide some results which will be useful in the rest of the paper.

The following remark shows that submultiplicativity holds asymptotically for most tensors; this appeared implicitly in \cite[Lemma 3.5]{MR929980} and then explicitly in \cite[Prop. 2.12]{DBLP:journals/corr/abs-1812-06952}. 

\begin{remark} \label{prevkp}
Any $T\in \BC^m\ot \BC^m\ot \BC^m$ is a degeneration of $M_{\langle 1,m,m\rangle}\in \BC^{m^2}\ot \BC^m\ot \BC^m$, so $T^{\boxtimes 3}$ is a degeneration of $M_{\langle m^2,m^2,m^2\rangle}= M_{\langle 1,m,m\rangle}\boxtimes  M_{\langle m,1, m\rangle}\boxtimes M_{\langle  m,m,1\rangle}$. In particular $\aR(T^{\boxtimes 3})\leq \aR(M_{\langle m^2,m^2,m^2\rangle}) = m^{2\omega}$ and therefore $\aR(T) \leq m^{2\omega /3}$. Since $\omega < 2.4$, we have $\aR(T) \leq m^{1.6}$. 
\end{remark}

\subsection{$T_{skewcw,q}$ and the laser method}
The first result is the analog of Theorem \ref{cwbndk} for the family $T_{skewcw,q}$: 

\begin{proposition}   \label{skewcwbnd} 
 For all $k$,
\begin{equation}\label{skewcwbndq}
\omega\leq  \log_q(\frac{4}{27}(\uR(T_{skewcw,q}^{\boxtimes k}))^{\frac{3}{k}}) .
\end{equation}  
\end{proposition}
\begin{proof}
Similarly to the case of $T_{cw,q}$, the proof follows immediately from \cite[Theorem 15.41]{BCS}, because $T_{skewcw,q}$ has the same \lq\lq block structure\rq\rq\  as $T_{cw,q}$.
 \end{proof}
 
In particular, similarly to $T_{cw,q}$, if $\aR(T_{skewcw,2}) = 3$ then $\omega = 2$ and it is potentially possible to improve the current upper bounds on $\omega$ using $T_{skewcw,q}$. Therefore, it is important to determine upper bounds on the border rank of the Kronecker powers of $T_{skewcw,q}$, and in particular in the case $q=2$.

\subsection{Coppersmith-Winograd tensors, symmetries, determinants and permanents}\label{subsec: symmetry group of tensors}

Let  $S^3 \bbC^m$ and   $\Lambda^3 \bbC^m$ 
respectively denote the subspaces of symmetric and skew-symmetric tensors in $\bbC^m \otimes \bbC^m \otimes \bbC^m$. By identifying the three copies of $\bbC^{q+1}$ in \eqref{littlecw} and \eqref{littleskcw}, we observe that $T_{cw,q}$ is isomorphic to a symmetric tensor and $T_{skewcw,q}$ is isomorphic to a skew-symmetric tensor. Inded, fixing a basis $a_0 \vvirg a_q$ of $\bbC^{q+1}$, the isomorphism $a_j \leftrightarrow b_j \leftrightarrow c_j$ provides 
\begin{equation}\label{eqn: cw as sym}
 T_{cw,q} = a_0 (a_1^2 + \cdots + a_q^2) \in S^3 \bbC^{q+1}.
\end{equation}
Similarly, if $q = 2u$ is even, the isomorphism
\begin{align*}
& a_0 \leftrightarrow -b_0 \leftrightarrow c_0\\
& a_j \leftrightarrow b_j \leftrightarrow -c_{u+j} \quad j = 1 \vvirg u \\
& a_{u+j} \leftrightarrow b_{u+j} \leftrightarrow c_{j} \quad j = 1 \vvirg u \\
\end{align*}
provides
\begin{equation}\label{eqn: skewcw as skew}
 T_{skewcw,q} = a_0 \wedge (a_1 \wedge a_{u+1} + \cdots + a_u \wedge a_q) \in \Lambda^3 \bbC^{q+1}. 
\end{equation}

We introduce some definitions concerning the symmetries of a tensor. The  group homomorphism $\Phi: GL(A) \times GL(B) \times GL(C) \to GL(A \otimes B \otimes C)$ defining the natural action on $A\ot B\ot C$ has a two dimensional kernel $\ker   \Phi = \{ (\lambda \Id_A, \mu\Id_B, \nu \Id_C): \lambda \mu \nu = 1\} \simeq (\bbC^*)^2$. 

In particular, the group $ \left( GL(A) \times GL(B) \times GL(C) \right) /(\bbC^*)^{\times 2}$ is identified with a subgroup of $GL(A \otimes B \otimes C)$. Given $T\in A\ot B\ot C$, the {\it symmetry group} of a tensor $T$ is the stabilizer of $T$ in $ \left(GL(A)\times GL(B)\times  GL(C)\right)/(\BC^*)^{\times 2} $, that is 
\begin{equation}\label{gtdef}
G_T:=\{ g\in  \left(GL(A)\times GL(B)\times GL(C)\right)/(\BC^*)^{\times 2}  
\mid \ g\cdot T=T\}.
\end{equation}

If the three spaces $A,B,C$ are identified, so that $A \otimes B \otimes C \simeq A^{\otimes 3}$, one can consider the action restricted to $GL(A)$ embedded diagonally as $GL^{diag}(A) \subseteq GL(A)^{\times 3}$. In this case, the kernel of the action reduces to the cyclic group $\bbZ_3 = \{ \zeta \Id_A : \zeta^3 =1\}$ and one can consider a restricted version of the symmetry group
\[
 G^s_T:= G_T \cap GL^{diag}(A) =\{g\in GL(A) / \bbZ_3 \mid g\cdot T=T\} .
\]

Let $\frakS_k$ be the permutation group on $k$ elements. 

We record the following observation:

\begin{proposition}\label{prop: symmetry group under kron powers}
Let $T\in A\ot B\ot C$ (resp. $T \in A^{\otimes 3}$). Then 
\[
G_{T^{\boxtimes N}} \supseteq G_T ^{\times N} \rtimes \frakS_N \quad (\text{resp. } G_{T^{\boxtimes N}}^s \supseteq {G_T^s} ^{\times N} \rtimes \frakS_N )
\]
where the symmetric group acts by permuting the factors of the direct product.
\end{proposition}
\begin{proof}
Let $T\in A\ot B\ot C$. Every factor $G_T$ in $ G_T ^{\times N} \rtimes \frakS_N$ acts on a single factor of $T^{\boxtimes N}$ and it stabilizes it by definition of $G_T$. The groups $\frakS_N$ permutes the factors of $T^{\boxtimes N}$, which is a Kronecker power and therefore it is stabilized.

The statement for $T \in A^{\otimes 3}$ is an immediate consequence.
\end{proof}

Consider the action of the symmetric group $\frakS_3$ which permutes the tensor factors. A tensor is symmetric if it is invariant under this action and skew-symmetric if it is skew-invariant. It is easy to observe that Kronecker powers of symmetric tensors are symmetric tensors. Moreover, odd Kronecker powers of skew-symmetric tensors are skew-symmetric and even Kronecker powers of skew-symmetric tensors are symmetric. 

We record the expressions of the $3 \times 3$ permanent and determinant polynomials as tensors in $\bbC^9 \otimes \bbC^9 \otimes \bbC^9$. Write $(-1)^\sigma$ for the sign of a permutation $\sigma$. Then
\begin{align*}
 \det_3 &= \frac{1}{6}\sum_{\sigma,\tau \in \frakS_3} (-1)^{\sigma\tau} a_{\sigma(1)\tau(1)} \otimes b_{\sigma(2) \tau(2)} \otimes c_{\sigma(3)\tau(3)}, \\ 
 \perm_3 &= \frac{1}{6} \sum_{\sigma,\tau \in \frakS_3} a_{\sigma(1)\tau(1)} \otimes b_{\sigma(2) \tau(2)} \otimes c_{\sigma(3)\tau(3)}. \\ 
\end{align*}
 
 \begin{lemma}\label{pdlem}
We have the following isomorphisms of tensors:  
 \begin{align*}
& T_{cw,2}^{\boxtimes 2} \cong \perm_3, \\ 
& T_{skewcw,2}^{\boxtimes 2} \cong  \det_3.
 \end{align*}
\end{lemma}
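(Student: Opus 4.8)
The plan is to recognize both right‑hand sides as Kronecker squares of the standard \lq\lq unit\rq\rq\ tensors of $(\BC^3)^{\ot 3}$. Indexing $\BC^3$ by $\{0,1,2\}$, set
\[
U_\vee:=\sum_{\sigma\in\FS_3}e_{\sigma(0)}\ot e_{\sigma(1)}\ot e_{\sigma(2)},\qquad
U_\wedge:=\sum_{\sigma\in\FS_3}\tsgn(\sigma)\,e_{\sigma(0)}\ot e_{\sigma(1)}\ot e_{\sigma(2)},
\]
the symmetric tensor of the monomial $z_0z_1z_2$ and the volume form of $\La{3}\BC^3$, respectively. The lemma then reduces to two statements: (a) the isomorphisms $T_{cw,2}\cong U_\vee$ and $T_{skewcw,2}\cong U_\wedge$ in $(\BC^3)^{\ot 3}$; and (b) the identities $U_\vee^{\boxtimes 2}=\perm_3$ and $U_\wedge^{\boxtimes 2}=\det_3$ in $(\BC^9)^{\ot 3}$, up to a normalization scalar which is immaterial since only isomorphism is claimed. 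Since $T'=(g\ot h\ot k)\cdot T$ gives $(T')^{\boxtimes 2}=\big((g\ot g)\ot(h\ot h)\ot(k\ot k)\big)\cdot T^{\boxtimes 2}$, the Kronecker square respects isomorphism, so (a) and (b) finish the proof.

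For (a): $T_{cw,2}$ is symmetric, and from \eqref{littlecw} with $q=2$ it is, up to a scalar, the symmetric tensor of the cubic $x_0(x_1^2+x_2^2)$. Over $\BC$ the quadric splits, $x_1^2+x_2^2=(x_1+\sqrt{-1}\,x_2)(x_1-\sqrt{-1}\,x_2)$, so the same linear change of coordinates in all three factors turns the cubic into $z_0z_1z_2$, whence $T_{cw,2}\cong U_\vee$. Since $T_{skewcw,2}$ is itself neither symmetric nor skew, I would instead produce the isomorphism explicitly: in \eqref{littleskcw} with $q=2$, negate the basis vectors $b_1,b_2$ and replace the ordered $C$-basis $(c_0,c_1,c_2)$ by $(-c_0,\,c_2,\,-c_1)$; a check on the six summands then shows $T_{skewcw,2}$ is carried to $\sum_{\sigma\in\FS_3}\tsgn(\sigma)\,a_{\sigma(0)}\ot b_{\sigma(1)}\ot c_{\sigma(2)}$, i.e. $T_{skewcw,2}\cong U_\wedge$. (That $T_{skewcw,2}$ is not skew yet lies in the $GL(A)\times GL(B)\times GL(C)$-orbit of a totally skew tensor is no contradiction, since this basis change is not diagonal.)

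For (b): identify a basis of $\BC^9=\BC^3\ot\BC^3$ with index pairs via $w_{ij}:=e_i\ot e_j$, so that
\[
U_\wedge^{\boxtimes 2}=\sum_{\sigma,\tau\in\FS_3}\tsgn(\sigma)\tsgn(\tau)\,w_{\sigma(0)\tau(0)}\ot w_{\sigma(1)\tau(1)}\ot w_{\sigma(2)\tau(2)},
\]
and likewise without signs for $U_\vee^{\boxtimes 2}$. Group the terms by $\rho:=\tau\sigma^{-1}\in\FS_3$: for fixed $\rho$ the six pairs $(\sigma,\tau)=(\pi,\rho\pi)$, $\pi\in\FS_3$, run through exactly the six orderings of the squarefree monomial $\prod_i w_{i\rho(i)}$ and so assemble into its symmetrized tensor, while $\tsgn(\sigma)\tsgn(\tau)=\tsgn(\pi)\tsgn(\rho\pi)=\tsgn(\rho)$ depends only on $\rho$. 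Summing over $\rho$ yields the symmetric tensor of $\sum_{\rho\in\FS_3}\prod_i w_{i\rho(i)}=\perm_3$ in the unsigned case, and of $\sum_{\rho\in\FS_3}\tsgn(\rho)\prod_i w_{i\rho(i)}=\det_3$ in the signed case.

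I expect the only step beyond routine bookkeeping to be the explicit isomorphism $T_{skewcw,2}\cong U_\wedge$ in (a): unlike for $T_{cw,2}$ no symmetry is available, and the spot where a sign slip would hide is in propagating the negations on the $B$- and $C$-factors through the six terms, so that small computation should be carried out carefully. As a consistency check, the border ranks match: $\uR(T_{cw,2})=4$ while $\uR(T_{skewcw,2})=5$ by Proposition \ref{skewbad} (and $\uR\le 5$ throughout $\BC^3\ot\BC^3\ot\BC^3$), in agreement with the known border ranks of $U_\vee$ and $U_\wedge$.
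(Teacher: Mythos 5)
Your proposal is correct, and while its first step coincides with the paper's, the second step takes a genuinely different route. Like the paper, you begin by identifying $T_{cw,2}$ with the symmetric tensor of the squarefree cubic (the paper splits $x_1^2+x_2^2$ by a real change of basis, you by the factorization $(x_1+\sqrt{-1}x_2)(x_1-\sqrt{-1}x_2)$ -- same content) and $T_{skewcw,2}$ with the generator of $\La 3\BC^3$ via an explicit basis change (yours checks out under the usual reading of ``replace the basis by,'' and is the analogue of the paper's $\tilde b_0=-b_0$, $\tilde c_1=c_2$, $\tilde c_2=-c_1$). Where you diverge is in identifying the Kronecker squares: the paper argues that $G_{T_{skewcw,2}^{\boxtimes 2}}$ contains $SL_3^{\times 2}\rtimes\BZ_2$ (resp.\ $(\bbT^{SL}_3\rtimes\FS_3)^{\times 2}\rtimes\BZ_2$ for $T_{cw,2}^{\boxtimes 2}$) and invokes the fact that $\det_3$ and $\perm_3$ are characterized by their stabilizers, a representation-theoretic step the paper justifies separately via a plethysm/weight-space decomposition; you instead expand $U_\wedge^{\boxtimes 2}$ and $U_\vee^{\boxtimes 2}$ directly and regroup the double sum by $\rho=\tau\sigma^{-1}$, so that each $\rho$ contributes the symmetrization of $\prod_i w_{i\rho(i)}$ with sign $\tsgn(\rho)$ (resp.\ no sign). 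Your regrouping is correct, and it buys an elementary, self-contained proof that produces the identification on the nose up to a global scalar -- incidentally confirming that the intended tensor is the symmetrization of the determinant, i.e.\ the version with sign $(-1)^\sigma(-1)^\tau$, so the paper's displayed formula with only $(-1)^\tau$ should be read accordingly. The paper's stabilizer argument is shorter once the characterization lemma is granted, yields the extra information that the stabilizer of the square actually equals that of $\det_3$ (resp.\ $\perm_3$), and fits the paper's broader theme of tracking symmetry groups of Kronecker squares; your argument trades that structural payoff for complete explicitness.
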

\begin{proof}

From \eqref{eqn: cw as sym}, we have $T_{cw,2} = a_0(a_1^2 + a_2^2)$. Let $\tilde{a}_1 = (a_1+ \sqrt{-1}a_2)$ and $\tilde{a}_2 = (a_1- \sqrt{-1}a_2)$, so that $T_{cw,2} = a_0 \tilde{a}_1 \tilde{a}_2$. This shows that after a suitable change of basis $T_{cw,2} = a_0a_1a_2$. Its symmetry group is $G^s_{T_{cw,2}} = \bbT_3^{SL} \rtimes \frakS_3$, where $\bbT^{SL}_3$ denotes the torus of diagonal matrices with determinant one, and $\frakS_3$   acts permuting the three basis elements. 

By Proposition \ref{prop: symmetry group under kron powers}, we deduce that $T_{cw,2}^{\boxtimes 2}$ is a symmetric tensor, with $G^s_{T_{cw,2}^{\boxtimes 2}} \supseteq (\bbT^{SL}_3 \rtimes \frakS_3)^{\times 2}\rtimes \BZ_2$ (and in fact equality holds). This is the stabilizer of the permanent polynomial $\perm_3$. Since the permanent is characterized by its stabilizer (see, e.g., Lemma \ref{lemma: det perm characterized} below), we conclude.

The proof for $T_{skewcw,2}$ is similar. From \eqref{eqn: skewcw as skew}, we have $T_{skewcw,2} = a_0 \wedge a_1 \wedge a_2$. Therefore $G^s_{T_{skewcw,2}} = SL_3$; indeed $T_{skewcw,2}$ is the unique, up to scale, $SL_3$-invariant in $\bbC^3 \otimes \bbC^3 \otimes \bbC^3$. 

By Proposition \ref{prop: symmetry group under kron powers}, we deduce that $T_{skewcw,2}^{\boxtimes 2}$ is a symmetric tensor, with $G^s_{T_{skewcw,2}^{\boxtimes 2}} \supseteq (SL_3)^{\times 2}\rtimes \BZ_2$ (and in fact equality holds). This is the stabilizer of the determinant polynomial $\det_3$. Since the determinant is characterized by its stabilizer, we conclude.
\end{proof}

The symmetric tensors $\det_m$ and $\perm_m$ are characterized by their stabilizers. For the determinant, this fact is classical. For the permanent, the statement, but not the proof, appears in \cite{GCT2}. For completeness, we provide a proof here assuming some familiarity with the representation theory of $SL_m$ and of the symmetric group $\frakS_m$.
\begin{lemma}\label{lemma: det perm characterized}
 Let $T \in S^m (\bbC^m \otimes \bbC^m)$ be a symmetric tensor of order $m$. If $G^s_T \supseteq (\bbT^{SL} \rtimes \frakS_m)^{\times 2} \rtimes \bbZ_2$, then $T = \perm_m$, up to scale. If $G^s_T \supseteq (SL_m^{\times 2}) \rtimes \bbZ_2$, then $T = \det_m$, up to scale.
\end{lemma}
\begin{proof}
First consider the case of the determinant. Let $SL_m\times SL_m=SL(E)\times SL(F)$ act on $S^m(E\ot F)$. This space decomposes as $SL(E) \times SL(F)$-representation as (see, e.g., \cite[\S6.7.6]{MR2865915})
$$
S^m(E\ot F) = \bigoplus_{|\pi|=m} S_{\pi}E\ot S_{\pi}F;
$$
this is multiplicity free, with the only trivial module $S_{(1^m)}E\ot S_{(1^m)}F=\Lambda^{m} E \otimes \Lambda^{m} F$. This is the space spanned by $\det_m$.

In the case of the permanent, note that the decomposition above holds for the action of $(\bbT^{SL} \rtimes \frakS_m)^{\times 2}$ as well. Then, the 
$ \bbT^{SL(E)}\times \bbT^{SL(F)}$-invariant subspace is given by the sum of the weight zero spaces $(S_{\pi}E)_0\ot (S_{\pi}F)_0$. By \cite{MR0414794}, 
one has the isomorphism $(S_{\pi}E)_0\ot (S_{\pi}F)_0=[\pi]_E\ot [\pi]_F$ for the weight zero spaces as $\frakS_E \times \frakS_F$-modules. The only trivial representation is the one corresponding to $\pi = (d)$, which is the subspace spanned by $\perm_m$. 
\end{proof}

  Lemma \ref{pdlem} guarantees that $\perm_3$ and $\det_3$ are tensors not subject to barriers for the laser method. In particular, either  $\aR(\det_3)=9$ or $\aR(\tperm_3)=9$ would imply $\omega=2$.

\begin{remark}
 A similar result holds for higher Kronecker powers. For every $k$, the even power $T_{skewcw,2}^{\boxtimes 2k}$ is invariant under $SL_3^{\times 2k} \rtimes \frakS_{2k}$. There is a unique invariant $\mathrm{PasDet}_{k,3}$ for $SL_3^{\times 2k}$ in $S^3 ((\BC^3)^{\ot 2k})$: it is the generator of the submodules $(\Lambda^3\bbC^3)^{\ot 2k}$, known as the \emph{Pascal determinant} (see, e.g., \cite[\S 8.3]{MR2865915}). If any of the Pascal determinants has minimal asymptotic rank, i.e., $\aR(\mathrm{PasDet}_{k,3}) = 3^{2k}$, then $\omega = 2$.
\end{remark}
  
\begin{remark}
 One can regard the $3 \times 3$ determinant and permanent as trilinear maps 
$\bbC^3 \times \bbC^3 \times \bbC^3\ra \BC$, where the three copies of $\bbC^3$ 
are the first, second and third column of a $3 \times 3$ matrix. From this 
point of view, the trilinear map given by the determinant is $T_{skewcw,2}$ as 
a tensor and the one given by the permanent is $T_{cw,2}$ as a tensor. This 
perspective, combined with the notion of product rank (in the sense of \cite{MR3492642}) provides the upper bounds $\uR_S(\perm_3) \leq 16$ and $\uR(\det_3) \leq 20$. These bounds already appeared in \cite{MR3494510,MR3492642} and are also a consequence of Lemma \ref{pdlem}.
\end{remark}

\subsection{Generic tensors in $\BC^3\ot \BC^3\ot \BC^3$}\label{genc3}
It is a classical fact that a generic tensor in $\BC^3\ot \BC^3\ot \BC^3$ has border rank five \cite{Strassen:RanksOptimalComputation}.

\begin{remark}  \label{numersquare}
Computer experiments indicate that for all $T\in \BC^3\ot \BC^3\ot \BC^3$, $\ur(T^{\boxtimes 2})\leq 22<25$.

Evidence for the remark is obtained as follows. We considered tensors $T \in \bbC^3 \otimes \bbC^3 \otimes \bbC^3$ whose coefficients in a fixed basis were taken independently and uniformly random in $[-1,1]$. We obtained numerically that $\uR(T^{\boxtimes 2}) \leq 22$. An instance of this computation is available in Appendix A of the Supplementary Material.
\end{remark}

\begin{problem} Prove the claim in Remark \ref{numersquare}. Even better, give a geometric proof.
\end{problem}

Remark \ref{numersquare} is not too surprising because $\BC^3\ot \BC^3\ot \BC^3$ is {\it secant defective}, in the sense that by a dimension count, one would expect the maximum border rank of a tensor to be $4$, but the actual maximum is $5$. This means that for a generic tensor, there is a $8$ parameter family of rank $5$ decompositions, and it is not surprising that the na{\"\i}ve $64$-parameter family of decompositions of the square might have   decompositions of lower border rank on the boundary.

\section{Koszul  flattenings and lower bounds for Kronecker powers}\label{kossect}

In this section we review Koszul flattenings and prove a result on propagation of Koszul flattening lower bounds under Kronecker products. We will use Koszul flattenings to prove $\uR(T_{skewcw,q}) \geq q+3$ in Proposition \ref{skewbad}. Moreover, we prove Theorem \ref{thm: lower bounds cw powers}: the proof will follow from Theorem \ref{cwbadnews}, Theorem \ref{cwbadnews3} and Corollary \ref{corol: propagation cw}.

Fix bases $\{a_i\}$, $\{b_j\}$, $\{ c_k\}$ of the vector spaces $A,B,C$, respectively; fix an integer $p$. Given a tensor $T = \sum_{ijk} T^{ijk} a_i \otimes b_j \otimes c_k \in A \otimes B \otimes C$, the $p$-th \emph{Koszul flattening} of $T$ on the space $A$ is the linear map
\begin{align*}
T_A^{\ww p}: \Lambda^p A \ot B^* &\to \Lambda^{p+1}A\ot C \\
X\ot \beta & \mapsto \textsum_{ijk}T^{ijk}\beta(b_j)(a_i\wedge X) \ot c_k.
\end{align*}
Then \cite[Proposition 4.1.1]{LanOtt:EqnsSecantVarsVeroneseandOthers} states 
\begin{equation}\label{kozinq}
\ur(T)\geq \frac{\trank(T_{A}^{\ww p})}{\binom {\tdim(A)-1}p}.
\end{equation}

This type of lower bound has a long history. More generally, one considers an embedding of the space $A\ot B\ot C$ into a large space of matrices. Then if a rank-one tensor maps to a rank $q$ matrix, a rank $r$ tensor maps to a rank at most $rq$ matrix, so the size $rq+1$ minors give equations testing for border rank $r$. In this case the size of the matrices is $\binom{\aaa}p\bbb \times \binom{\aaa}{p+1}\ccc$ and a rank-one tensor maps to a matrix of rank $\binom{\aaa-1}p$. Here $\aaa=\tdim A$, $\bbb=\tdim B$ and $\ccc=\tdim C$.

In practice, one considers a subspace ${A'}^*\subseteq A^*$ of dimension $2p+1$ and restricts $T$ (considered as a trilinear form) to ${A'}^* \times  B^* \times C^*$ to get an optimal bound, so the denominator $\binom{\dim(A)-1}{p}$ is replaced by $\binom{2p}{p}$ in \eqref{kozinq}. Equivalently, one considers a linear map $\phi : A \to A'$ and the corresponding Koszul flattening map gives a lower bound for $\uR(\phi(T))$, which, by linearity, is a lower bound for $\uR(T)$.

The case $p=1$ is a straightening of Strassen's equations \cite{Strassen:RanksOptimalComputation}. There are numerous expositions of Koszul flattenings and their generalizations, see, e.g., \cite[\S 7.3]{MR2865915}, \cite[\S 7.2]{BalBerChrGes:PartiallySymRkW}, \cite{MR3781583}, \cite[\S 2.4]{MR3729273}, or \cite{MR3761737}.

We use Koszul flattenings to give the following lower bound on $\uR(T_{skewcq,q})$:
\begin{proposition}\label{skewbad}
For every even $q \geq 2$, $\uR(T_{skewcw,q})\geq q+3$.
\end{proposition}
\begin{proof}
Write $q = 2u$. Fix a space $A' = \langle e_0,e_1,e_2 \rangle$. Define $\phi: A \to A'$ by 
\begin{align*}
\phi( a_0 ) &= e_0, \\
\phi(a_i) &= e_1 \quad \text{for $i = 1 \vvirg u$} , \\
\phi(a_s) &= e_2 \quad \text{for $s = u+1 \vvirg q$}.
 \end{align*}
 
 As an element of $\Lambda^3 A \subseteq A \otimes A \otimes A$, we have $T_{skewcw,q} = a_0 \wedge \sum_{i=1}^u a_i \wedge a_{u+i}$ as in \eqref{eqn: skewcw as skew}.

We prove that for $T = (\phi \otimes \Id_B \otimes \Id_C) (T_{skewcw,q}) \in A' \otimes B \otimes C$, one obtains $\rank( T_{A'}^{\wedge 1} ) = 2(q+2) +1$. This provides the lower bound $\uR( T ) \geq \left\lceil \frac{2(q+2) +1 }{2} \right\rceil = q+3$.

We record the images via $T_{A'}^{\wedge 1}$ of a basis of $A' \otimes B^*$. 
Fix the range of $i = 1 \vvirg u$:
\begin{align*}
T_{A'}^{\wedge 1}( e_0 \otimes \beta_0 ) &= (e_0 \wedge e_1) \otimes 
\textsum_{i=1}^u c_{u+i} - (e_0 \wedge e_2) \otimes \textsum_{i=1}^u c_{i}, \\
T_{A'}^{\wedge 1}( e_0 \otimes \beta_i ) &= (e_0 \wedge e_2) \otimes c_0 , \\
T_{A'}^{\wedge 1}( e_0 \otimes \beta_{u+i} ) &= (e_0 \wedge e_1) \otimes c_0 , 
\\
T_{A'}^{\wedge 1}( e_1 \otimes \beta_{0} ) &= (e_1 \wedge e_2) \otimes 
\textsum_{i=1}^u c_{u+i} , \\
T_{A'}^{\wedge 1}( e_1 \otimes \beta_{i} ) &= (e_0 \wedge e_1) \otimes c_{u+i} 
+ e_1 \wedge e_2 \otimes c_0 , \\
T_{A'}^{\wedge 1}( e_1 \otimes \beta_{u+i} ) &= e_0 \wedge e_1 \otimes c_i, \\
T_{A'}^{\wedge 1}( e_2 \otimes \beta_{0} ) &= (e_1 \wedge e_2) \otimes 
\textsum_{i=1}^u c_{i} , \\
T_{A'}^{\wedge 1}( e_2 \otimes \beta_{i} ) &= e_0 \wedge e_2 \otimes c_{u+i}, 
\\
T_{A'}^{\wedge 1}( e_2 \otimes \beta_{u+i} ) &= (e_0 \wedge e_2) \otimes c_{i} 
- e_1 \wedge e_2 \otimes c_0 . \\
\end{align*}
Notice that the image of $\sum_{i=1}^u (e_1 \otimes \beta_{i}) - \sum_{i=1}^u (e_2 
\otimes \beta_{u+i}) - e_0 \otimes \beta_0$ is (up to scale) $e_1 \wedge e_2 
\otimes c_0$. 

From the contributions above, we deduce that the image of $T_{A'}^{\wedge 1}$ contains the three subspaces
\begin{align*}
 &\langle e_0\wedge e_1, e_0 \wedge e_2 , e_1 \wedge e_2\rangle \otimes \langle c_0  \rangle ,\\ 
 &\langle e_1 \wedge e_2 \rangle \otimes \langle \textsum_{i=1}^u c_{i} , \textsum_{i=1}^u c_{u+i}  \rangle , \\
 &\langle e_0 \wedge e_1  , e_0 \ww e_2 \rangle \otimes \langle c_1 \vvirg c_q\rangle.
\end{align*}
These subspaces are in direct sum, therefore we conclude 
\[
 \rank( T_{A'}^{\wedge 1}) \geq  3 + 2 + 2q = 2q + 5.
\]
\end{proof}

 \subsection{Propagation of lower bounds under Kronecker products}
 
 In \cite{ChrJenZui:NonMultTensorRank,arXiv:1801.04852}, it was shown that generalized flattening lower bounds are multiplicative under the \emph{unflattened} tensor product. The same result does not hold for Kronecker products. However, we provide a partial multiplicativity result for Koszul flattenings lower bounds.
 
A tensor $T\in A\ot B\ot C$, with $\dim B=\dim C$ is {\it $1_A$-generic} if $T(A^*) \subseteq B\ot C$ contains a full rank element. 

\begin{proposition}\label{kronbrlower} 
Let $T_1 \in A_1 \otimes B_1 \otimes C_1$ with $\dim B_1 = \dim C_1$ be a tensor. Let $A'$ be a quotient of $A_1$ with $\dim A' = 2p+1$ and suppose $T_1$ has a Koszul flattening lower bound for border rank $\uR(T) \geq r $ given by 
${T_1}^{\wedge p}_{A'}$. Let $T_2 \in A_2 \otimes B_2 \otimes C_2$, with $\dim B_2 = \dim C_2 = \bfb_2$ be $1_{A_2}$-generic.   Then 
\begin{equation}\label{kstar}
 \uR(T_1 \boxtimes T_2) \geq \left\lceil  \frac{ \rank ({T_1}^{\wedge p}_{A'}) 
\cdot \bfb_2 }{ \binom{2p}{p}} \right\rceil.
\end{equation}
In particular, if $\frac{ \rank ({T_1}^{\wedge p}_{A'}) }{ \binom{2p}{p}}\in \BZ$, then  $\uR(T_1 \boxtimes T_2)\geq r\bbb_2$. 
\end{proposition}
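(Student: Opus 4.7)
The plan is to reduce the Koszul flattening computation for $T_1 \boxtimes T_2$ to one in which the relevant linear map factorizes as a tensor product, so that ranks multiply by exactly $\bfb_2$.

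First, since $T_2$ is $1_{A_2}$-generic, I would choose $\alpha \in A_2^*$ such that $\tilde T_2 := T_2(\alpha) : B_2^* \to C_2$ is an isomorphism (so $\rank(\tilde T_2) = \bfb_2$). Let $\phi_1 : A_1 \to A_1'$ with $\dim A_1' = 2p+1$ denote the restriction used in the hypothesized Koszul flattening bound for $T_1$ (with $\phi_1 = \Id$ if no restriction is needed). Apply the linear map $\phi_1 \ot \alpha : A_1 \ot A_2 \to A_1'$ to the first tensor factor of $T_1 \boxtimes T_2 \in (A_1 \ot A_2) \ot (B_1 \ot B_2) \ot (C_1 \ot C_2)$. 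A direct coordinate computation gives
\[
(\phi_1 \ot \alpha)(T_1 \boxtimes T_2) = \phi_1(T_1) \ot \tilde T_2 \in A_1' \ot (B_1 \ot B_2) \ot (C_1 \ot C_2),
\]
and since border rank is non-increasing under linear restriction of a single factor, one has $\uR(T_1 \boxtimes T_2) \geq \uR(\phi_1(T_1) \ot \tilde T_2)$.

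Next, I would unravel the Koszul flattening $(\phi_1(T_1) \ot \tilde T_2)^{\wedge p}_{A_1'}$ in coordinates. On a pure element $X \ot \beta_1 \ot \beta_2 \in \Lambda^p A_1' \ot B_1^* \ot B_2^*$ the definition gives
\[
(\phi_1(T_1) \ot \tilde T_2)^{\wedge p}_{A_1'}(X \ot \beta_1 \ot \beta_2) = \bigl((\phi_1(T_1))^{\wedge p}_{A_1'}(X \ot \beta_1)\bigr) \ot \tilde T_2(\beta_2),
\]
so this map is the tensor product of the two linear maps $(\phi_1(T_1))^{\wedge p}_{A_1'}$ and $\tilde T_2$. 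Because rank is multiplicative for tensor products of linear maps, and $\rank((\phi_1(T_1))^{\wedge p}_{A_1'}) = \rank(T_1^{\wedge p}_{A_1})$ by hypothesis,
\[
\rank\bigl((\phi_1(T_1) \ot \tilde T_2)^{\wedge p}_{A_1'}\bigr) = \rank(T_1^{\wedge p}_{A_1}) \cdot \bfb_2.
\]

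Finally, I would apply the Koszul flattening lower bound \eqref{kozinq} to $\phi_1(T_1) \ot \tilde T_2$ on the restricted subspace $A_1'$ of dimension $2p+1$, which supplies denominator $\binom{2p}{p}$. Combined with the inequality from the first step, and the fact that border rank is an integer, this yields \eqref{kstar}. The last assertion is then immediate, since $r \bfb_2$ is an integer whenever $\rank(T_1^{\wedge p}_{A_1})/\binom{2p}{p} = r \in \bbZ$. The whole argument is routine; the only step requiring care is the tensor-product factorization of the Koszul flattening in the second paragraph, and this is essentially bookkeeping rather than a genuine obstacle.
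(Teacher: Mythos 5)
Your proposal is correct and follows essentially the same route as the paper's proof: pick $\alpha\in A_2^*$ with $T_2(\alpha)$ of full rank by $1_{A_2}$-genericity, restrict the first factor via $\phi\ot\alpha$, observe that the resulting Koszul flattening is the Kronecker product of ${T_1}^{\wedge p}_{A_1}$ with $T_2(\alpha)$, and conclude by multiplicativity of matrix rank under Kronecker products. Your explicit verification of the tensor-product factorization on pure elements is exactly the computation the paper leaves implicit.
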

\begin{proof}
Identify $T_1$ with its image in $A' \otimes B_1 \otimes C_1$. The lower bound for $T_1$ is 
\[
 \uR(T_1) \geq \left\lceil \frac{\rank ({T_1}^{\wedge p}_{A'}) }{ 
\binom{2p}{p}} \right\rceil.
\]
Let $\alpha \in A_2^*$  be such that 
$T(\alpha ) \in B_2 \otimes C_2$ has full rank  $\bfb_2$, which exists by 
$1_{A_2}$-genericity.   Define 
$\psi : A' \otimes A_2 \to A'$   by   $\psi = \Id_{A'} \otimes \alpha$ and set $\Psi:=\psi\ot \Id_{B_1\ot C_1\ot B_2\ot C_2}$. Then 
$(\Psi( T_1 \boxtimes T_2) ^{\wedge p}_{A'} )$ provides the desired lower 
bound.

Indeed,   the linear map $( \Psi( T_1 \boxtimes T_2) ^{\wedge p}_{A'}  )$ 
coincides with ${T_1}^{\wedge p}_{A'} \boxtimes T_1(\alpha)$. Since matrix rank 
is multiplicative under Kronecker product, we   conclude.  
\end{proof}

\subsection{A lower bound for the Kronecker square of $T_{cw,q}$}\label{subsection: first proof of cwsquare}

In this section, we give a proof of the first statement in Theorem \ref{thm: lower bounds cw powers}.

The statement for $q = 2$, can be checked explicitly. The lower bound $\uR(T_{cw,2}^{\boxtimes 2}) \geq 15$ follows from the $p=2$ Koszul flattening lower bound and coincides with the current best known lower bound for the border rank of the $3 \times 3$ permanent polynomial. The upper bound is immediate by submultiplicativity.

\begin{theorem}\label{cwbadnews}
 Let $q \geq 3$. Then $\uR(T_{cw,q}^{\boxtimes 2}) = (q+2)^2$.
\end{theorem}
\begin{proof}
Recall the expression of $T_{cw,q}$ from \eqref{littlecw}.
 When  $q = 3$, the result is true by a direct calculation using the $p=2$ Koszul flattening with a sufficiently generic restriction $A \to \bbC^5$. 
 
 Assume $q>3$. Write $a_{ij} = a_i \otimes a_j \in A^{\otimes 2}$ and similarly for $B^{\otimes 2}$ and $C^{\otimes 2}$. Let $A' = \langle e_0,e_1,e_2 \rangle$ and define the linear map $\phi_2: A^{\otimes 2} \to A'$ by
\begin{align*}
\phi_2( a_{00} ) &= \phi_2(a_{01}) = \phi_2(a_{10})= e_0 + e_1, \\
\phi_2(a_{11}) &= e_0  , \\
\phi_2(a_{02}) &= \phi_2(a_{20}) = e_1 + e_2 \\
\phi_2(a_{33}) &= \phi_2(a_{21}) = e_2 \\
\phi_2(a_{0i}) &= \phi_2(a_{i0}) = e_1 \quad \text{for $i = 3 \vvirg q$}\\
\phi_2(a_{ij}) &= 0 \quad \text{for all other pairs $(i,j)$}.
 \end{align*}

Write $T_q:= \phi_2( T_{cw,q}^{\boxtimes 2}) \in A' \otimes B^{\otimes 2} \otimes C^{\otimes 2}$.  Consider the  $p=1$ Koszul flattening $(T_q)^{\wedge 1}_{A'}  : A' \otimes {B^{\otimes 2}}^*   \to \Lambda^2 A' \otimes C^{\otimes 2}$.

We are going to prove that $\rank( (T_q)^{\wedge 1}_{A'}  ) = 
2(q+2)^2$. This provides the lower bound $\uR(T_{cw,q}^{\boxtimes 2}) \geq 
(q+2)^2$ and equality follows because the upper bound is immediate by submultiplicativity.

We proceed by induction on $q$.  When $q=4$
one does a direct computation with the $p=1$ Koszul flattening, which is left to the reader, and 
which provides the base of the induction.  

Write $W_j = a_0 \otimes b_j \otimes c_j + a_j \otimes b_0 \otimes c_j + a_j \otimes b_j \otimes c_0$. Then $T_{cw,q} = \sum_{j=1}^q W_j$, so that 
$T_{cw,q}^{\boxtimes 2} = \sum_{ij} W_i \boxtimes W_j$.

If $q \geq 4$, write $T_{cw,q} = T_{cw, q-1} + W_q$, so 
$T_{cw,q}^{\boxtimes 2} = T_{cw,q-1}^{\boxtimes 2} + T_{cw,q-1} \boxtimes W_q + 
W_q \boxtimes T_{cw,q-1} + W_q \boxtimes W_q$. Let $S_q = \phi_2 ((T_{cw,q-1}\boxtimes  W_q + W_q \boxtimes 
T_{cw,q-1} + W_q \boxtimes W_q) )$.

Write $U_{1} =  A' \otimes \langle \beta_{ij} : i,j = 0 \vvirg q-1 \rangle$ and 
$U_2 =  A' \otimes \langle \beta_{qi}, \beta_{iq} : i=0 \vvirg q \rangle$ so 
that $U_1 \oplus U_2 = A' \otimes B^{\ot 2 *}$. Similarly, define $V_1 = \Lambda^2 A' 
\otimes \langle c_{ij} : i,j = 0 \vvirg q-1 \rangle$ and $V_2 = \Lambda^2 A' 
\otimes \langle c_{qi}, c_{iq} : i=0 \vvirg q \rangle$, so that $V_1 \oplus 
V_2 = \Lambda^2 A' \otimes C^{\ot 2} $. Observe that $(T_{q-1} )_{A'}^{\wedge 1}$ is identically $0$ on $U_2$ and its image is contained in 
$V_1$. Moreover, the image of  $U_1$ under  $(S_q)^{\wedge 1}_{A'}$  is contained in 
$V_1$. Representing the Koszul flattening in blocks, we have
\[
( T_{ q-1} )_{A'}^{\wedge 1} = \left[ \begin{array}{cc} M_{11} & 
0 \\ 0 & 0 \end{array} \right] \qquad
({S_q})^{\wedge 1}_{A'} = \left[ \begin{array}{cc} N_{11} & N_{12} \\ 0 & N_{22} 
\end{array} \right]
\]
therefore $\rank(  (T_{q})^{\ww 1}_{A'} ) \geq  \rank ( M_{11} + N_{11}) + 
\rank(N_{22})$.

First, we prove that $\rank ( M_{11} + N_{11}) \geq \rank(M_{11}) = 2(q+1)^2$. 
This follows by a degeneration argument.

Consider the degeneration given by the linear maps $(g_\eps,h_\eps) \in GL(B^{\otimes 2}) \times GL(C^{\otimes 2})$ with 
\[
\begin{array}{rlcrl}
 g_\eps : b_{iq} & \mapsto \eps b_{iq} & \qquad & h_\eps : c_{iq} & \mapsto \eps c_{iq} \\
    b_{qi} &\mapsto \eps b_{qi} & & c_{qi} & \mapsto \eps c_{qi}\\  
 b_{ij} &\mapsto b_{ij} \quad \text{if $i,j \neq q $} & & c_{ij} & \mapsto c_{ij} \quad \text{if $i,j \neq q $} \\
 \end{array}.
\]
Let $T_{q,\eps} = (g_\eps ,h_\eps ) \cdot T_{q}$. We have $T_{q,\eps} = T_{q-1} + S_{q,\eps}$ where $S_{q,\eps} = (g_\eps ,h_\eps ) \cdot S_{q}$. In particular $\lim_{\eps \to 0} S_{q,\eps} = 0$. Moreover, the degeneration preserves the spaces $U_1,U_2,V_1,V_2$, therefore the Koszul flattening of $T_{q,\eps}$ has the same block structure as the one of $T_{q}$ with
\[
( S_{q,\eps})^{\wedge 1}_{A'} = \left[\begin{array}{cc}N_{11}(\eps) & N_{12}(\eps) \\ 0 & N_{22}(\eps) \end{array}\right].
\]
Since $\lim_{\eps \to 0} S_{q,\eps} = 0$, we have $ \lim_{\eps \to 0} N_{11}(\eps) \to 0$. The value of $\rank(M_{11} + N_{11}(\eps))$ is constant for (generic) $\eps \neq 0$, and by semicontinuity we obtain
\[
\rank(M_{11}) = \rank(\lim_{\eps \to 0}( M_{11} + N_{11}(\eps)) \leq \rank( M_{11} + N_{11}).
\]
By the induction hypothesis $\rank(M_{11}) = 2(q+1)^2$, threfore $\rank (M_{11} + N_{11}) \geq 2(q+1)^2$.

We show that $\rank(N_{22}) = 2(2q+3)$. The following equalities are modulo 
$V_1$. Moreover, each equality is modulo the tensors resulting from the previous 
ones. They are all straightforward applications of 
the Koszul flattening map, which in these cases, can always be performed on some 
copy of $W_i \boxtimes W_j$.
\begin{align*}
 ({S_q})^{\wedge 1}_{A'} ( e_1 \otimes \beta_{qj}) &\equiv e_1 \wedge e_0 
\otimes c_{qj}  \quad \text{ for $j = 3 \vvirg q$} \\
 ({S_q})^{\wedge 1}_{A'} ( e_1 \otimes \beta_{jq}) &\equiv e_1 \wedge e_0 
\otimes c_{jq}  \quad \text{ for $j = 3 \vvirg q$} \\
 ({S_q})^{\wedge 1}_{A'} ( e_0 \otimes \beta_{3q}) & \equiv e_0 \wedge e_1 
\otimes c_{0q} \\
 ({S_q})^{\wedge 1}_{A'} ( e_0 \otimes \beta_{q3}) & \equiv e_0 \wedge e_1 
\otimes c_{q0} \\
 ({S_q})^{\wedge 1}_{A'} ( e_0 \otimes \beta_{q1}) & \equiv e_0 \wedge e_1 
\otimes c_{q1} \\
 ({S_q})^{\wedge 1}_{A'} ( e_0 \otimes \beta_{1q}) & \equiv e_0 \wedge e_1 .
\otimes c_{1q} \\
\end{align*}
Further passing modulo $\langle e_0 \wedge e_1 \rangle \otimes C$, we obtain
\begin{align*}
 ({S_q})^{\wedge 1}_{A'} ( e_0 \otimes \beta_{0q}) & \equiv e_0 \wedge e_2 
\otimes c_{2q} \\
 ({S_q})^{\wedge 1}_{A'} ( e_0 \otimes \beta_{q0}) & \equiv e_0 \wedge e_2 
\otimes c_{q2} \\
 ({S_q})^{\wedge 1}_{A'} ( e_0 \otimes \beta_{q2}) & \equiv e_0 \wedge e_2 
\otimes c_{0q} \\
 ({S_q})^{\wedge 1}_{A'} ( e_0 \otimes \beta_{2q}) & \equiv e_0 \wedge e_2 
\otimes c_{q0} \\
 ({S_q})^{\wedge 1}_{A'} ( e_1 \otimes \beta_{20}) & \equiv e_1 \wedge e_2 
\otimes c_{0q} \\
 ({S_q})^{\wedge 1}_{A'} ( e_1 \otimes \beta_{02}) & \equiv e_1 \wedge e_2 
\otimes c_{q0} \\
 ({S_q})^{\wedge 1}_{A'} ( e_1 \otimes \beta_{q0}) & \equiv e_1 \wedge e_2 
\otimes c_{2q} \\
 ({S_q})^{\wedge 1}_{A'} ( e_1 \otimes \beta_{0q}) & \equiv e_1 \wedge e_2 
\otimes c_{q2}, \\
\end{align*}
and modulo the above, 
\begin{align*}
 ({S_q})^{\wedge 1}_{A'} ( e_2 \otimes \beta_{qj}) &\equiv e_2 \wedge (e_0 + 
e_1) \otimes c_{qj}  \quad \text{ for $j = 3 \vvirg q$} \\
 ({S_q})^{\wedge 1}_{A'} ( e_2 \otimes \beta_{jq}) &\equiv e_2 \wedge (e_0 + 
e_1) \otimes c_{jq}  \quad \text{ for $j = 3 \vvirg q$} \\
 ({S_q})^{\wedge 1}_{A'} ( e_2 \otimes \beta_{q1}) &\equiv e_2 \wedge (e_0 + 
e_1) \otimes c_{q1} \\
 ({S_q})^{\wedge 1}_{A'} ( e_2 \otimes \beta_{1q}) &\equiv e_2 \wedge (e_0 + 
e_1) \otimes c_{1q} .\\
\end{align*}
Finally passing modulo $\langle e_1 \wedge e_2 \rangle$, we have
\begin{align*}
 ({S_q})^{\wedge 1}_{A'} ( e_2 \otimes \beta_{q0}) &\equiv e_2 \wedge e_0 
\otimes c_{q1}  \\
 ({S_q})^{\wedge 1}_{A'} ( e_2 \otimes \beta_{0q}) &\equiv e_2 \wedge e_0 
\otimes c_{1q} . \\
\end{align*}
All the tensors listed above are linearly independent. Adding all the 
contributions together, we obtain
\[
 \rank ( ({S_q})^{\wedge 1}_{A'} ) = [ 2(q-3) + 1 ] + 4 + 8 + 2 + [2(q-3)+1] + 4 
= 2(2q+3)
\]
as desired, and since $2(q+3)^2=2(q+1)^2+2(2q+3)$,  this concludes the proof. 
 \end{proof}
 
 We will provide a second proof of Theorem \ref{cwbadnews}, which will generalize to the proof of Theorem \ref{cwbadnews3}. More precisely, we will give a representation-theoretic argument to compute the rank of the Koszul flattening map considered in the proof above. The same representation-theoretic technique will apply for the third Kronecker power.

\subsection{A short detour on computing ranks of equivariant maps}\label{subsec: schur recap}

We briefly explain how to exploit Schur's Lemma (see, e.g., \cite[\S1.2]{FH}) 
to compute the rank of an equivariant linear map. This is a standard technique, 
used extensively e.g.,  in \cite{MR3376667,GesIkPa:GCTMatrixPowering} and will  
 reduce the proof of Theorems \ref{cwbadnews} and \ref{cwbadnews3} to the 
computation of the ranks of specific linear maps in small dimension.
 
Let $G$ be a reductive group. In the proof of Theorems \ref{cwbadnews} and 
\ref{cwbadnews3}, $G$ will be the product of symmetric groups. Let $\Lambda_G$ 
be the set of irreducible representations of $G$. For $\lambda \in \Lambda_G$,  
 let  $W_\lambda$ denote  the corresponding irreducible module.

Suppose $U,V$ are two representations of $G$. Write $U= \bigoplus_{\lambda \in 
\Lambda_G} W_\lambda^{\oplus m_\lambda}$, $V = \bigoplus_{\lambda \in 
\Lambda_G} W_\lambda^{\oplus \ell_\lambda}$, where $m_\lambda$ is the 
multiplicity of   $W_\lambda$ in $U$ and   $\ell_\lambda$ is the multiplicity of 
$W_{\lambda}$ in $V$. The direct summand corresponding to $\lambda$ is called 
the \emph{isotypic component} of type $\lambda$.

Let $f : U \to V$ be a $G$-equivariant map. By Schur's Lemma \cite[\S1.2]{FH}, 
$f$ decomposes   as $f = \oplus f_\lambda$, where 
$f_\lambda : W_\lambda^{\oplus m_\lambda} \to W_\lambda^{\oplus \ell_\lambda}$ are $G$-equivariant. 
Consider multiplicity spaces $M_\lambda, L_\lambda$ with $\dim M_\lambda = 
m_\lambda$ and $\dim L_\lambda = \ell_\lambda$ so that   $W_\lambda^{\oplus 
m_\lambda} \simeq M_\lambda \otimes W_\lambda$ as a $G$-module, where $G$ acts 
trivially on $M_\lambda$ and similarly 
$W_\lambda^{\oplus \ell_\lambda} \simeq L_\lambda \otimes W_\lambda$.

By Schur's Lemma, the map   $f_\lambda : M_\lambda \otimes W_\lambda \to 
L_\lambda \otimes W_\lambda$ decomposes as $f_\lambda = \phi_\lambda \otimes 
\Id_{[\lambda]}$, where $\phi_\lambda : M_\lambda \to L_\lambda$. Thus  
$\rank(f)$ can be expressed in terms of $\rank(\phi_\lambda)$ and the dimension of the multiplicity spaces $W_\lambda$ for $\lambda \in 
\Lambda_G$:
\[
 \rank(f) = \textsum_\lambda \rank(\phi_\lambda) \cdot \dim W_\lambda.
\]

The ranks $ \rank(\phi_\lambda)$ can be computed via restrictions of 
$f$. For every $\lambda$, fix a nonzero vector $w_\lambda \in W_\lambda$, so that 
$M_\lambda \otimes \langle w_\lambda\rangle$ is a subspace of $U$. Here and in 
what follows, for a subset $X\subset V$,
$\langle X\rangle$ denotes the span of $X$. Then  the rank of the restriction 
of $f$ to $M_\lambda \otimes \langle w_\lambda \rangle$ coincides with the rank 
of $\phi_\lambda$. 

The second proof  of Theorem  \ref{cwbadnews} and proof of Theorem  \ref{cwbadnews3} will follow the algorithm
described above, exploiting the symmetries of $T_{cw,q}$. Consider the action 
of the symmetry group $\frakS_{q}$ on $A \otimes B \otimes C$ defined by 
permuting the basis elements with indices $\{1 \vvirg q\}$. More precisely, a 
permutation $\sigma \in 
\frakS_q$  induces the linear map defined by $\sigma (a_i) = a_{\sigma(i)}$ for 
$i= 1\vvirg q$ and $\sigma(a_0) = a_0$. The group $\frakS_q$ acts on $B,C$ 
similarly, and the simultaneous action on the three factors defines {an 
$\frakS_q$-action on $A \otimes B \otimes C$}. The tensor $T_{cw,q}$ is 
invariant under this action.

\subsection{Second Proof of Theorem \ref{cwbadnews}}\label{secondpf}

We use the method explained in Section \ref{subsec: schur recap} to give a representation-theoretic proof of Theorem \ref{cwbadnews}.

\begin{proof}[Proof of Theorem \ref{cwbadnews}]
As before, the case $q=3$ can be verified explicitly. For $q \geq 4$, we apply the $p=1$  Koszul flattening map  to the same restriction of  $T_{cw,q}^{\boxtimes 2}$ as the first proof, although to be consistent with the code on the website, we use the less appealing swap of the roles of $a_2$ and $a_3$ in the projection $\phi_2$.
  
The tensor $T_{cw,q}$ is invariant under the action of $\frakS_q$ acting on the indices $\{1 \vvirg q\}$ of the basis elements of $\bbC^{q+1}$. Therefore  
$T_{cw,q}^{\boxtimes 2}$ is invariant under the action of   $\frakS_{q} 
\times \frakS_{q}$ on $A^{\otimes 2} \otimes B^{\otimes 2} \otimes C^{\otimes 2}$. Let 
$\Gamma := \frakS_{q-3} \times \frakS_{q-3}$ 
where $\frakS_{q-3}$ is the permutation group on $\{ 4 \vvirg q\}$; $T_{cw,q}^{\boxtimes 2}$ is invariant under the action of $\Gamma$. 

Moreover, the projection $\phi_2$ is invariant under the action of $\Gamma$. 

In general, the map $A \otimes B \otimes C \to \Hom( B^* \otimes \Lambda^p A , C \otimes \Lambda^{p+1}A)$ is equivariant for the action of $GL(A) \times GL(B) \times GL(C)$. Using this fact, and the invariancy with respect to $\Gamma$ described above, we deduce $(\phi_2(T_{cw,q}^{\boxtimes 2}))_{A'}^{\ww 1}$ is $\Gamma$-equivariant.

We now apply the method described in \S\ref{subsec: schur recap} to compute $\trank((T_q)_{A'}^{\ww 1})$. 

Let $[\triv]$ denote the trivial $\frakS_{q-3}$-representation and  let  $V$ denote  the standard representation, that is the Specht module associated to the partition $(q-4,1)$ of $q-3$. We have $\dim [\triv] = 1$ and $\dim V = q-4$. When $q=4$ only the trivial representation appears.

The spaces $B,C$ are isomorphic as $\frakS_{q-3}$-modules and they decompose as $B = C = [\triv]^{\oplus 5} \oplus V$. After fixing a $5$-dimensional multiplicity space $\bbC^5$ for the trivial isotypic component, we write ${B^*} = C = \bbC^5 \otimes [\triv] \oplus V$. To distinguish the two $\FS_{q-3}$-actions, we write ${B^*}^{\otimes 2} = ([\triv]_L^{\oplus 5} \oplus V_L)\ot ([\triv]_R^{\oplus 5} \oplus V_R)$ and similarly for $C^{\otimes 2}$

Thus,  
\begin{align*}
 {B^*}^{\otimes 2} = C^{\otimes 2} = & {\bbC^5}^{\otimes 2} \otimes ([\triv]_L \otimes [\triv]_R) \oplus \\
 & \bbC^5 \otimes ( [\triv]_L \otimes V_R) \quad \oplus \\
 & \bbC^5  \otimes (V_L \otimes [\triv]_R)  \quad \oplus \\
 &(V_L \otimes V_R).
\end{align*}

Write $W_1 \vvirg W_4$ for the four irreducible representations in the 
decomposition above and let $M_1 \vvirg M_4$ be the four corresponding 
multiplicity spaces. 

Recall from \cite{Fulton:YoungTableaux} that a basis of $V$ is given by 
standard 
Young tableaux of shape $(q-4,1)$ ({with} entries in $4 \vvirg q$ for 
consistency with the action of $\frakS_{q-3}$); let $w_{std}$ be the vector 
corresponding to the standard tableau having $4,6 \vvirg q$ in the first row 
and $5$ in the second row. We refer to \cite[\S7]{Fulton:YoungTableaux} for the 
straightening laws of the tableaux. Let $w_{\triv}$ be a generator of the 
trivial representation $[\triv]$. Writing $\bbC^{q+1} = \langle e_0 \vvirg e_q\rangle$, we explicitly have $w_{std} = e_5-e_4$ and the multiplicity space $5$-dimensional multiplicity space of the trivial representation is $\langle e_0 \vvirg e_3, \textsum_{4}^q e_j \rangle$. 

For each of the four isotypic components in the decomposition above,  we fix a vector $w_i \in W_i$ and explicitly realize  the subspaces $M_i \otimes \langle w_i \rangle$ of ${B^*}^{\otimes 2}$ as follows:
\[
 \begin{array}{cccc}
  W_i & w_i &  \dim M_i & M_i \otimes \langle w_i \rangle \\ \midrule  \\ 
  {[\triv]_L \otimes [\triv]_R } & w_{\triv} \otimes w_{\triv} & 25 & 
\begin{smallmatrix}
                                                                   \langle 
\beta_{ij} : i,j =0 \vvirg 3 \rangle \oplus \\
                                                                   \langle 
\sum_{j=4}^q \beta_{ij} : i = 0 \vvirg 3\rangle  \oplus \\
                                                                   \langle 
\sum_{i=4}^q \beta_{ij} : j = 0 \vvirg 3\rangle \oplus \\
                                                                   \langle 
\sum_{i,j=4}^q \beta_{ij} \rangle
                                                                  
\end{smallmatrix} \\
                                                                  \\
{[\triv]_L \otimes V_R } &  w_\triv \otimes w_{std} & 5 & \begin{smallmatrix}
                                                       \langle \beta_{i5} - 
\beta_{i4} : i = 0 \vvirg 3 \rangle \oplus \\
                                                       \langle \sum_{i=4}^q 
(\beta_{i5}-\beta_{i4}) \rangle
                                                      \end{smallmatrix} \\
\\
{V_L \otimes [\triv]_R  } &  w_{std} \otimes w_\triv & 5 & \begin{smallmatrix}
                                                       \langle \beta_{5j} - 
\beta_{4j} : j = 0 \vvirg 3 \rangle \oplus \\
                                                       \langle \sum_{j=4}^q 
(\beta_{5j}-\beta_{4j}) \rangle
                                                      \end{smallmatrix} \\
                                                      \\
V_L \otimes V_R & w_{std} \otimes w_{std} & 1 & \langle \beta_{55} - \beta_{45} 
- 
\beta_{54} + \beta_{44} \rangle.
 \end{array} 
\]

The subspaces in $C^{\otimes 2}$ are realized similarly. 

Since $(T_{cw,q}^{\boxtimes 2})^{\wedge 1}_{A'}$ is $\Gamma$-equivariant, by Schur's Lemma, it has the isotypic decomposition $(T_{cw,q}^{\boxtimes 2})^{\wedge 1}_{A'} = f_1 \oplus f_2 \oplus f_3 \oplus f_4$,  where 
\begin{equation}\label{eq: diagonal blocks square}
f_i : A' \otimes (M_i \otimes W_i) \to \Lambda^2 A' \otimes (M_i \otimes W_i).
\end{equation}
As explained in \S\ref{subsec: schur recap}, it suffices to compute the ranks of the four restrictions $\Phi_i : A' \otimes M_i \otimes \langle w_i\rangle  \to 
\Lambda^2 A'\otimes M_i \otimes \langle w_i \rangle$ to the multiplicities spaces.

The four matrices representing $\Phi_1 \vvirg \Phi_4$ are computed by a routine which exploits their structure. The script to compute the matrices and their ranks is available in Appendix D of the Supplementary Material. The method to compute the matrices is explained in Section \ref{section: boxparametrized}.

The script provides an expression for the entries of the matrices $\Phi_i$ which are univariate polynomials in $q$ up to a global univariate polynomial factor. The expressions are valid for $q \geq 5$. The rank of the Koszul flattening in the cases $q= 3$ and $q = 4$ is computed directly.

 We determine a lower bound on $\rank(\Phi_i)$ by computing a matrix $P_i \cdot \Phi_i \cdot Q_i$, where $P_i$ is a rectangular matrix whose entries are rational functions of $q$ (well defined for $q \geq 5$) and $Q_i$ is a rectangular matrix whose entries are constant. The resulting matrix $P_i \cdot \Phi_i \cdot Q_i$ is a square matrix, upper triangular with $\pm 1$ on the diagonal, so that the size of $P_i \Phi_i Q_i$ gives a lower bound on $\rank( \Phi_i)$.

We summarize the results of the script in the following table.
\[
\def\arraystretch{1.6}
 \begin{array}{ccccc}
\text{$W_i$} & \dim W_i & \dim M_i & \rank ( \Phi_i ) & \text{contribution to total rank}  \\ \midrule
{[\triv]_L \otimes [\triv]_R } & 1 & 25 & 72 & 72 \\
{[\triv]_L \otimes V_R } & q-4 & 5 & 12 & 12(q-4)\\
{V_L \otimes [\triv]_R } & q-4 & 5 & 12 & 12(q-4)\\
{V_L \otimes V_R } & (q-4)^2& 1 &  2 & 2(q-4)^2 
\end{array}
\]

Adding the total contributions, we obtain 
\begin{align*}
 \rank (T^{\wedge 1}_{A'}) & = 2 \cdot (q-4)^2 + 12 \cdot (q-4) + 12 \cdot (q-4) + 72 \cdot 1= 2(q+2)^2.
\end{align*}
This concludes the proof of Theorem \ref{cwbadnews}. 
\end{proof}

\subsection{A lower bound for the Kronecker cube of $T_{cw,q}$}\label{subsec: proof kron cube Tcw}

In this section, we use the method explained in Section \ref{subsec: schur recap} and illustrated in Section \ref{secondpf} to prove the second part of Theorem \ref{thm: lower bounds cw powers}.

\begin{theorem}\label{cwbadnews3}
 Let $q \geq 5$. Then $\uR(T_{cw,q}^{\boxtimes 3}) = (q+2)^3$.
\end{theorem}
\begin{proof}
 We will give a lower bound on $\uR(T_{cw,q}^{\boxtimes 3})$ by computing its Koszul flattening for $p =2$. Write $a_{ijk} = a_i \otimes a_j \otimes a_k \in A^{\otimes 3}$ and similarly for $B^{\otimes 3}$ and $C^{\otimes 3}$. Let $\{ \alpha_{ijk} \} \subseteq A^{* \otimes 3}$ be the dual basis to $\{ a_{ijk} \} \subseteq A^{\otimes 3}$. Let $A' = \langle e_0 \vvirg e_4\rangle$ be a $5$-dimensional space and let $\{ e^0 \vvirg e^4\}$ be the dual basis of $\{e_0 \vvirg e_4\}$ and define $\phi_3 : A^{\otimes 3} \to A'$ to be the linear map whose transpose $\phi_3^T : {A'}^* \to A^{*\otimes 3}$ is given by 
\begin{align*}
 \phi_3^T(e^0) &= \alpha_{000} \\
 \phi_3^T(e^1) &= \textsum_{i=1}^q (\alpha_{i00} +\alpha_{0i0} +\alpha_{00i}) \\
 \phi_3^T(e^2) &= \alpha_{001} +\alpha_{010} +\alpha_{012} +\alpha_{102} +\alpha_{110}+\alpha_{121} +\alpha_{200} +\alpha_{211} \\
 \phi_3^T(e^3) &=  \alpha_{022 } +\alpha_{030 } +\alpha_{031 } +\alpha_{100 } +\alpha_{103 } - \alpha_{120 } +\alpha_{210 } +\alpha_{212 } +\alpha_{300} \\
 \phi_3^T(e^4) &= \alpha_{002 } +\alpha_{004 } +\alpha_{011 } +\alpha_{014 } +\alpha_{020 }  +\alpha_{023 } +\alpha_{032 } +\alpha_{040 } +\alpha_{100 } +\alpha_{122 }  +\alpha_{220 } +\alpha_{303 } .\\
\end{align*}
Let $T_q = \phi_3 ( T_{cw,q}^{\boxtimes 3}) \in A' \otimes B^{\otimes 3} \otimes C^{\otimes 3}$ and consider the Koszul flattening 
\[
 (T_q)^{\wedge 2}_{A'} : \Lambda^2 A' \otimes {B^*}^{\otimes 3} \to \Lambda^3 A' \otimes C^{\otimes 3}.
\]

We will show $\rank ( (T_q)^{\wedge 2}_{A'}) = 6 (q+2)^3$, which implies $\uR(T_{cw,q}^{\boxtimes 3}) \geq (q+2)^3$.

We employ the same method as in Section \ref{secondpf} in the case of $T_{cw,q}^{\boxtimes 2}$. The Koszul flattening is equivariant for the action of $\Gamma = \frakS_{q-4}^{\times 3}$ where $\frakS_{q-4}$ acts on $\{ 5 \vvirg q\}$. In particular $\bbC^{q+1}$ splits under the action of $\frakS_{q-4}$ into a $6$-dimensional subspace of invariants $\bbC^6 \otimes [\triv] = \langle e_0 \vvirg e_4, e_5 + \cdots + e_q \rangle$ and a copy of the standard representation $V = \langle e_i - e_5 : i = 6\vvirg q\rangle$, with $\dim V = q-5$.

Hence, the spaces $B^{\otimes 3}$ and $C^{\otimes 3}$ split into the direct sum of $8$ isotypic components for the action of $\Gamma$ as follows (we use indices $1,2,3$ to denote the trivial or the standard representation on the first, second or third factor):
\begin{align*}
 {B^*}^{\otimes 3} \simeq C^{\otimes 3} = & (\bbC^{6})^{\otimes 3} \otimes ([\triv]_1 \otimes [\triv]_2 \otimes [\triv]_3) \oplus  \\
 & (\bbC^{6})^{\otimes 2} \otimes \Bigl[  ([\triv]_1 \otimes [\triv]_2 \otimes V_3) \oplus \\
 & \phantom{(\bbC^{6})^{\otimes 2} \otimes \Bigl[} ([\triv]_1 \otimes V_2 \otimes [\triv]_3) \oplus \\
 & \phantom{(\bbC^{6})^{\otimes 2} \otimes \Bigl[} (V_1 \otimes [\triv]_2 \otimes [\triv]_3) \Bigr] \oplus \\
 & (\bbC^{6}) \otimes \Bigl[([\triv]_1 \otimes V_2 \otimes V_3) \oplus  \\
& \phantom{(\bbC^{6}) \otimes \Big[}(V_1 \otimes V_2 \otimes [\triv]_3) \oplus\\ 
& \phantom{(\bbC^{6}) \otimes \Big[} (V_1 \otimes [\triv]_2 \otimes V_3) \Bigr] \oplus \\
& V_1 \otimes V_2 \otimes V_3
\end{align*}

Similarly to the square case, for each of the eight isotypic components, we consider $w_i \in W_i$ where $W_i$ is the corresponding irreducible and we compute the rank of the restriction $\Psi_i : \Lambda^2 A' \otimes M_i \otimes \langle w_i \rangle \to \Lambda^3 A' \otimes M_i \otimes \langle w_i \rangle$ of the Koszul flattening.

The matrices representing the maps $\Psi_i$ are computed exploiting the structure of the tensors involved, following the method described in Section \ref{section: boxparametrized}. The expression computed by the script is valid for $q \geq 6$. The case $q = 5$ is computed explicitly. Their ranks are computed by reducing $\Psi_i$ to a triangular matrix as in the previous case.

The ranks of the restrictions are recorded in the following table:
\[
\def\arraystretch{1.6}
 \begin{array}{ccccc}
   W_i & \dim W_i &  \dim M_i & \rank( \Psi_i) & \text{total contribution}\\ \midrule
  {[\triv]_1 \otimes [\triv]_2 \otimes [\triv]_3} & 1 & 6^3 = 216 & 2058 & 2058 \\
  \\
  \begin{array}{c}
  [\triv]_1 \otimes [\triv]_2 \otimes V_3 \\
  \text{(and permutations)}
  \end{array}   & \begin{array}{c} (q-5) \\ ( \text{three times} ) \end{array} &\begin{array}{c}  6^2  = 36  \\ ( \text{three times} ) \end{array} &  \begin{array}{c} 294 \\ ( \text{three times} ) \end{array} & 3\cdot 294(q-5) \\
  \\
  \begin{array}{c}
  [\triv]_1 \otimes V_2 \otimes V_3 \\
  \text{(and permutations)}
  \end{array}   & \begin{array}{c} (q-5)^2 \\ ( \text{three times} ) \end{array} &\begin{array}{c}  6  \\ ( \text{three times} ) \end{array} & \begin{array}{c} 42 \\ ( \text{three times} ) \end{array} & 3 \cdot 42(q-5)^2 \\
  \\
V_1 \otimes V_2 \otimes V_3 & (q-5)^3 & 1 & 6 & 6(q-5)^3\\
 \end{array} 
\]

Adding all the contributions together, we obtain
\begin{align*}
 \rank (T^{\wedge 2}_{A'}) = &6(q-5)^3 + 3\cdot 42(q-5)^2 +  3\cdot 294(q-5)+ 2058 \cdot 1 = 6 \cdot (q+2)^3.
\end{align*}
This concludes the proof of Theorem \ref{cwbadnews3}.
 \end{proof}

 The third part of Theorem \ref{thm: lower bounds cw powers} is a consequence of Proposition \ref{kronbrlower} and Theorem \ref{cwbadnews3} for the case $q \geq 5$ and  Proposition \ref{kronbrlower} and Theorem \ref{cwbadnews} in the case $q = 4$. We record it explicitly in the following Corollary
 
\begin{corollary}\label{corol: propagation cw}
  For all $q>4$  and all $N$, $\ur(T_{cw,q}^{\boxtimes N})\geq (q+1)^{N-3}(q+2)^3$, and $\ur(T_{cw,4}^{\boxtimes N})\geq 36\times 5^{N-2}$.
 \end{corollary}
 \begin{proof}
If $q > 4$, let $T_1 = T_{cw,q}^{\boxtimes 3}$ and $T_2 = T_{cw,q}^{\boxtimes N-3}$. Since $T_2$ is $1_A$-generic, the lower bound $\ur(T_{cw,q}^{\boxtimes N})\geq (q+1)^{N-3}(q+2)^3$ follows by Proposition \ref{kronbrlower}.

If $q = 4$, let $T_1 = T_{cw,q}^{\boxtimes 2}$ and $T_2 = T_{cw,q}^{\boxtimes N-2}$. Again, since $T_2$ is $1_A$-generic, the lower bound $\ur(T_{cw,4}^{\boxtimes N})\geq (4+2)^2\times 5^{N-2} =  36\times 5^{N-2}$ follows by Proposition \ref{kronbrlower}.
 \end{proof}

 This concludes the proof of Theorem \ref{thm: lower bounds cw powers}.

\section{Upper bounds for Waring rank and border Waring rank of $\det_3$}\label{section: determinant upper bounds}
 
In this section, we give a proof of Theorem \ref{thm: waring rank and border rank det3}. 

We briefly recall the definition of Waring rank and border Waring rank. A symmetric tensor $T\in S^d \bbC^m \subseteq {\BC^m}^{\otimes d}$ has {\it Waring rank one} if $T=a^{\otimes d}$ for some $a\in \bbC^m$. The {\it Waring rank} of $T$, denoted $\bfR_S(T)$, is the smallest $r$ such that $T$ is sum of $r$ tensors of  Waring rank one. The {\it border Waring rank} of $T$, denoted $\ur_S(T)$, is the smallest $r$ such that $T$ is limit of a sequence of tensors of Waring rank $r$. If $T$ is regarded as a homogeneous polynomial of degree $d$, then $a \in \bbC^m$ can be regarded as a linear form and $a^{\otimes d}$ coincides with the $d$-th power of $a$: in this setting, the Waring rank is the minimum number of summands in an expression of $T$ as sum of powers of linear forms.
\subsection{Waring rank of $\det_3$}

Theorem \ref{thm: waring rank and border rank det3} will be a consequence of Theorem \ref{det18thm} and Theorem \ref{det17thm} below.

\begin{theorem}\label{det18thm}
 The Waring rank of $\det_3$ is at most $18$: $\bfR_S(\det_3) \leq 18$.
\end{theorem}
\begin{proof}
We give the rank $18$ decomposition for $\det_3$ explicitly, as a collection of $18$ linear forms on $\bbC^9 = \bbC^3 \otimes \bbC^3$ whose cubes add up to $\det_3$. The linear forms are given in coordinates recorded in the matrices below: the $3 \times 3$ matrix $( \zeta_{ij}) $ represents the linear form $\sum_{ij} \zeta_{ij} x_{ij}$. This presentation highlights some of the symmetries of the decomposition.

Let $\theta = \exp(2 \pi i / 6)$  and let $\bar{\theta}$ be its inverse. The tensor $\det_3 = T_{skewcw,2}^{\boxtimes 2} = \det (x_{ij})  \in S^3 (\bbC^3 \otimes \bbC^3)$ satisfies 
\[
 \det_3 = \sum _1^{18} L_i^3
\]
where $L_1 \vvirg L_{18}$ are the $18$ linear forms given by the following coordinates:
\begin{align*}
\begin{array}{lll}
L_1 = \begin{pmatrix}
-\theta & 0 & 0 \\
0 & -\frac{1}{3} & 0 \\
0 & 0 & \bar{\theta}
\end{pmatrix}
&L_2 =
\begin{pmatrix}
-\bar{\theta} & 0 & 0 \\
0 & -\frac{1}{3} & 0 \\
0 & 0 &\theta
\end{pmatrix}
&L_3 = 
\begin{pmatrix}
-\bar{\theta} & 0 & 0 \\
0 &  \frac{1}{3} \bar{\theta} & 0 \\
0 & 0 & \bar{\theta}
\end{pmatrix}\\ 
 L_4 = 
\begin{pmatrix}
-1 & 0 & 0 \\
0 & 0 & -\bar{\theta} \\
0 & -\frac{1}{3}\theta & 0
\end{pmatrix}
& L_5 = 
\begin{pmatrix}
\bar{\theta} & 0 & 0 \\
0 & 0 & 1 \\
0 & -\frac{1}{3}\theta & 0
\end{pmatrix} 
& L_6 = 
\begin{pmatrix}
\theta & 0 & 0 \\
0 & 0 & -\theta \\
0 & -\frac{1}{3}\theta & 0
\end{pmatrix}\\
L_7 = 
\begin{pmatrix}
0 &  \frac{1}{3} \bar{\theta} & 0 \\
-\theta & 0 & 0 \\
0 & 0 & 1
\end{pmatrix} 
&L_8 = 
\begin{pmatrix}
0 &  \frac{1}{3} \bar{\theta} & 0 \\
-\bar{\theta} & 0 & 0 \\
0 & 0 & -\bar{\theta}
\end{pmatrix}
&L_9 = 
\begin{pmatrix}
0 & \frac{1}{3}\theta & 0 \\
-\bar{\theta} & 0 & 0 \\
0 & 0 & 1
\end{pmatrix} \\ 
L_{10} = 
\begin{pmatrix}
0 & -\frac{1}{3}\theta & 0 \\
0 & 0 & \bar{\theta} \\
-1 & 0 & 0
\end{pmatrix} 
&L_{11} = 
\begin{pmatrix}
0 & -\frac{1}{3} \bar{\theta} & 0 \\
0 & 0 &\theta \\
-1 & 0 & 0
\end{pmatrix} 
&L_{12} = 
\begin{pmatrix}
0 & \frac{1}{3} & 0 \\
0 & 0 & -1 \\
-1 & 0 & 0
\end{pmatrix}\\ 
L_{13} = 
\begin{pmatrix}
0 & 0 & 1 \\
-1 & 0 & 0 \\
0 & -\frac{1}{3} & 0
\end{pmatrix} 
&L_{14} = 
\begin{pmatrix}
0 & 0 & 1 \\
\bar{\theta} & 0 & 0 \\
0 & \frac{1}{3}\theta & 0
\end{pmatrix} 
&L_{15} = 
\begin{pmatrix}
0 & 0 & 1 \\
\theta & 0 & 0 \\
0 &  \frac{1}{3} \bar{\theta} & 0
\end{pmatrix} \\ 
L_{16} = 
\begin{pmatrix}
0 & 0 & \bar{\theta} \\
0 & -\frac{1}{3}\theta & 0 \\
1 & 0 & 0
\end{pmatrix} 
&L_{17} = 
\begin{pmatrix}
0 & 0 & \bar{\theta} \\
0 & -\frac{1}{3} \bar{\theta} & 0 \\
-\bar{\theta} & 0 & 0
\end{pmatrix} 
&L_{18} = 
\begin{pmatrix}
0 & 0 &\theta \\
0 & -\frac{1}{3} \bar{\theta} & 0 \\
1 & 0 & 0
\end{pmatrix}
\end{array}.
\end{align*}

The equality can be verified by hand. A Macaulay2 file performing the calculation is available in Appendix B of the Supplementary Material.
 
\end{proof}
The Waring decomposition of \ref{det18thm} was generalized in \cite{JohnsTeit:ImprovedUpperBound} giving an upper bound for the Waring rank of the determinant polynomial $\det_m$.

\subsection{Waring border rank of $\det_3$}
The statement for the border rank is given by the following result. As in the previous proof, the border rank upper bound is proved explicitly giving linear forms, depending on a parameter $t$, whose cubes provide a border rank expression for $\det_3$. The algebraic numbers involved are more complicated than in the previous case. 

The result was achieved by numerical methods, which allowed us to sparsify the decomposition and ultimately determine the value of the coefficients. A detailed explanation of the method is given in Section \ref{subsec: discussion numerics}.

\begin{theorem}\label{det17thm}
The border Waring rank of $\det_3$ is at most $17$: $\uR_S(\det_3) \leq 17$. 
\end{theorem}
\begin{proof}
The $17$ linear forms providing a border rank decomposition of $\det_3$ are described below. Consider
\begin{align*}
\begin{array}{lll}
L_1(t)= \begin{pmatrix}
z_{1} & 0 & 0 \\
0 & z_{2} t & 0 \\
-1 & 0 & 0
\end{pmatrix} & 
L_{2}(t)= \begin{pmatrix}
z_{3} & 0 & 0 \\
z_{4} & 0 & z_{5} t \\
z_{6} & 0 & 0
\end{pmatrix} & 
L_3(t)= \begin{pmatrix}
-z_{36} & z_{7} t & 0 \\
-z_{38} & 0 & -z_{39} t \\
0 & 0 & t
\end{pmatrix} \\
L_4(t)= \begin{pmatrix}
0 & 0 & t \\
-z_{34} & 0 & 0 \\
0 & z_{8} t & -z_{35} t
\end{pmatrix} & 
L_{5}(t)= \begin{pmatrix}
0 & -z_{19} t & -z_{20} t \\
0 & 0 & 0 \\
-1 & 0 & 0
\end{pmatrix} & 
L_{6}(t)= \begin{pmatrix}
-z_{22} & z_{9} t & 0 \\
-z_{23} & 0 & -z_{24} t \\
-z_{25} & 0 & 0
\end{pmatrix} \\
L_{7}(t)= \begin{pmatrix}
z_{10} & z_{11} t & 0 \\
z_{12} & 0 & z_{13} t \\
z_{14} & 0 & 0
\end{pmatrix} & 
L_{8}(t)= \begin{pmatrix}
z_{15} & -t & 0 \\
z_{16} & 0 & z_{17} t \\
z_{18} & 0 & 0
\end{pmatrix} & 
L_{9}(t)= \begin{pmatrix}
0 & z_{19} t & z_{20} t \\
0 & z_{21} t & 0 \\
1 & 0 & 0
\end{pmatrix} \\
L_{10}(t)= \begin{pmatrix}
-z_{41} & 0 & 0 \\
0 & 0 & 0 \\
-z_{44} & 0 & 0
\end{pmatrix} & 
L_{11}(t)= \begin{pmatrix}
z_{22} & 0 & 0 \\
z_{23} & 0 & z_{24} t \\
z_{25} & 0 & 0
\end{pmatrix} & 
L_{12}(t)= \begin{pmatrix}
-z_{31} & z_{26} t & 0 \\
0 & z_{27} t & 0 \\
0 & 0 & t
\end{pmatrix} \\
L_{13}(t)= \begin{pmatrix}
z_{28} & z_{29} t & 0 \\
z_{30} & 0 & -t \\
0 & t & 0
\end{pmatrix} & 
L_{14}(t)= \begin{pmatrix}
z_{31} & z_{32} t & 0 \\
0 & 0 & 0 \\
0 & z_{33} t & -t
\end{pmatrix} & 
L_{15}(t)= \begin{pmatrix}
0 & 0 & -t \\
z_{34} & 0 & 0 \\
0 & 0 & z_{35} t
\end{pmatrix} \\
\multicolumn{3}{c}{
L_{16}(t)= \begin{pmatrix}
z_{36} & z_{37} t & 0 \\
z_{38} & 0 & z_{39} t \\
0 & z_{40} t & -t
\end{pmatrix}  \qquad 
L_{17}(t)= \begin{pmatrix}
z_{41} & z_{42} t & 0 \\
0 & z_{43} t & 0 \\
z_{44} & 0 & 0
\end{pmatrix}
}
\end{array}
\end{align*}

The coefficients $z_1 \vvirg z_{44}$ are algebraic numbers described as follows. Let $y_*$ be a real root of the polynomial 
\begin{align*}
& x^{27} - 2x^{26} + 17 x^{25} - 29 x^{24} + 81 x^{23} + 52 x^{22} - 726 x^{21} + 3451 x^{20} -10901 x^{19} + 25738 x^{18} -  \\
 &    50663 x^{17} + 72133 x^{16} - 72973 x^{15} + 10444 x^{14} + 138860 x^{13} - 308611 x^{12} + 427344 x^{11} \\
  &   - 267416 x^{10} - 196096 x^9 + 762736 x^8 - 1236736 x^7 + 1092352 x^6 - 537600 x^5 - 42240 x^4  + \\& 684032 x^3 - 1136640 x^2 + 1146880 x - 520192.\\
\end{align*}
For $i = 1 \vvirg 44$, we consider algebraic numbers $y_j$ in the field extension $\bbQ[y_*]$, described as a polynomial of degree (at most) $26$ in $y_*$ with rational coefficients. Notice that all the $y_j$'s are real. The expressions of the $y_1 \vvirg y_{44}$ in terms of $y_*$ are provided in the file \texttt{yy_exps} in Appendix C of the Supplementary Material. Let $z_j$ be the unique real cubic root of $y_j$. 

We are going to prove that, with this choice of coefficients $z_j$,
\begin{equation}\label{br17definingeqs}
  t^2 \det_3 + O(t^3) = \sum_{i=1}^{17} L_i(t)^{3} .
\end{equation}

The condition $t^2 \det_3 + O(t^3) = \sum_{i=1}^{17} L_i(t)^{3}$ is equivalent to the fact that the degree $0$ and the degree $1$ components of $\sum_{i=1}^{17} L_i(t)^{3} $ vanish and that the degree $2$ component equals $\det_3$. Given the sparse structure of the $L_i(t)$, this reduces to a   system of $54$ cubic equations in the $44$ unknowns $z_1 \vvirg z_{44}$. Our goal is to show that the algebraic numbers described above are a solution of this system.

We show that the $z_i$'s satisfy each equation as follows. After evaluating the equations at the $z_i$'s, there are two possible cases
\begin{enumerate}
 \item all monomials appearing in the equation are elements of $\bbQ[y_*]$; we say that this is an equation of type 1; there are $14$ such equations;
 \item at least one monomial appearing in the equation is not an element of $\bbQ[y_*]$; we say that this is an equation of type 2; there are $40$ such equations.
\end{enumerate}

For equations of type 1, we provide expressions of each monomial in terms of $y_*$. To verify that each expression is indeed equal to the corresponding monomial, it suffices to compare the cube of the given expression and the expression obtained by evaluating the monomial at the $y_j$'s. Finally, the equation can be verified in $\bbQ[y_*]$. This is performed by the file \texttt{checkingType1eqns.m2}.

For equations of type 2, let $u$ be one of the monomials which do not belong to $\bbQ[y_*]$. We claim that it is possible to choose the monomial in such a way that $\bbQ[u^3] = \bbQ[y_*]$. For each equation, we choose one of the monomials and we verify the claim as follows. The element $u^3$ has an expression in terms of $y_*$ which equals the chosen monomial evaluated at the $y_i$'s. Let $M_u$ be the $27 \times 27$ matrix with rational entries such that
\[
(1,u^3, \cdots , u^{3 \cdot 26})  = \left( 1 , y_* \vvirg y_*^{26} \right) \cdot M_u ;
\]
$M_u$ can be computed directly by considering the expressions of the powers of $u^3$ in terms of $y_*$. Then $\bbQ[u^3] = \bbQ[y_*]$ if and only if $M_u$ is full rank.

In particular $y_*$ has an expression in terms of $u^3$, which can be computed inverting the matrix $M_u$. A consequence of this is that $\bbQ[u] = \bbQ[y_*,u]$.

At this point, we observe that $\bbQ[u]$ contains the other monomials occurring in the equation as well. To see this, we proceed as in the case of equations of type 1. For each monomial occurring in the equation, we provide an expression in terms of $u$ (in fact, to speed up the calculation, we provide an expression in terms of $u$ and $y_*$, which is equivalent to an expression in $u$ because $\bbQ[u^3] = \bbQ[y_*]$ and $y_*$ has a unique expression in terms of $u^3$); we compare the cube of this expression (appropriately reduced modulo the minimal polynomial of $y_*$ and the relation between $u^3$ and $y_*$) with the expression obtained by evaluating the monomial at the $y_i$'s (expressed in terms of $y_*$). This shows that all monomials occurring in the expression belong to $\bbQ[u]$, and verifies that the given expressions are indeed equal to the corresponding monomials. Finally, the equation is verified in $\bbQ[u]$ as in the case of type 1. This is performed by the file \texttt{checkingType2eqns.m2}.
\end{proof}

\subsection{Discussion of how the decomposition was obtained}\label{subsec: discussion numerics}
 Many steps were accomplished by finding solutions of polynomial equations by 
nonlinear
optimization. In each case, this was accomplished using a variant of Newton's
method applied to the mapping of variable values to corresponding polynomial
values. 
The result of this procedure in each case is limited precision machine floating
point numbers. 

First, we  attempted   to solve  the equations describing
a Waring rank 17 decomposition of $\tdet_3$ with nonlinear optimization, 
namely, $\tdet_3 = \sum_{i=1}^{17}
(w_i')^{\ot 3}$, where $w_i' \in \CC^{3\times 3}$.  Instead of finding a
solution to working precision, we obtained a sequence of local
refinements to an approximate solution where the distance between $\det_3$ and its approximation is slowly
converging to zero, and some of the parameter values are exploding to
infinity. Numerically,  these are  Waring decompositions of   polynomials   
very close
to $\tdet_3$.

Next, this approximate solution  needed  to be upgraded to a solution to
equation~(\ref{br17definingeqs}). 
 
 We found     a choice of parameters in the neighborhood of a
solution,  and then applied   local optimization to solve to working
precision. We used the following method:  Consider the linear mapping $M :
\CC^{17} \to S^3(\CC^{3\times 3})$, $M(e_i) = (w_i')^{\ot 3}$, and  let $M =
U\Sigma V^\ast$ be its singular value decomposition (with respect to the
standard inner products for the natural coordinate systems).  We observed that 
the singular values  seemed  to be naturally partitioned by order of
magnitude.  We  estimated  this magnitude factor as $t_0 \approx 10^{-3}$, and
 wrote  $\Sigma'$ as $\Sigma$ where  we   multiplied each singular value by
$(t/t_0)^k$, with $k$ chosen to agree with this observed partitioning, so that
the constants remaining were reasonably sized. Finally,  we let   $M' = U 
\Sigma'
V^\ast$, which has entries in $\CC[[t]]$. Thus $M'$ is a representation of the
map $M$  with  a  parameter $t$.

 Next,  for each $i$, we  optimized   to find a best fit to the equation 
$(a_i+tb_i
 +t^2c_i)^{\ot 3} = M'(e_i)$, which is defined by polynomial equations in  the  
 entries
of $a_i$,  $b_i$ and  $c_i$.  The $a_i$,  $b_i$  and $c_i$  we   constructed in
this way proved to be a good initial guess to optimize 
equation~(\ref{br17definingeqs}), and we immediately saw quadratic convergence 
to a
solution to machine precision. At this point, we greedily  sparsified   the 
solution
by speculatively zero-ing values and re-optimizing, rolling back one step in 
case
of failure. After sparsification, it turned out the $c_i$ were not needed. The 
resulting matrices  are those given in the proof.

  To compute the minimal polynomials and
other integer relationships between quantities, we  used 
Lenstra-Lenstra-Lov\'asz integer lattice basis reduction \cite{MR682664}. As an
example, let $\zeta \in \RR$ be approximately an algebraic number of degree
$k$. Let $N$ be a large number inversely proportional to the error of
$\zeta$. Consider the integer lattice with
basis $\{e_i +
\lfloor N \zeta^i \rfloor e_{k+1}\} \subset \ZZ^{k+2}$, for $0\le i\le k$. 
Then elements of this
lattice are of the form  $v_0e_0 + \cdots + v_k e_k + E e_{k+1}$, where $E
\approx N p(\zeta)$, $p = v_0+v_1x+\cdots x_k x^k$.   Polynomials $p$ for
which $\zeta$ is an approximate root are distinguished by the property of
having relatively small Euclidean norm in this lattice. Computing a small norm
vector in an integer lattice is accomplished by LLL reduction of a known basis.

For example, 
 the fact that  the number field of degree 27 obtained by adjoining any
$z_\a^3$ to $\QQ$ contains all the rest    was determined via    LLL
reduction, looking for expressions of $z_\a^3$ as a polynomial in $z_\b^3$ for
some fixed $\b$. These expressions of $z_\a^3$ in a common number field can be
checked to have the correct minimal polynomial, and thus agree with our initial
description of the $z_\a$. LLL reduction was also 
used   to find the expressions of values as polynomials in the primitive
root of the various number fields.

After refining the known value of the parameters to $10,000$ bits of precision
using Newton's method, LLL reduction was successful in identifying the minimal
polynomials. The degrees were simply guessed, and the results checked by
evaluating the computed polynomials in the parameters to higher precision.


\begin{remark} With the minimal polynomial information, it is possible to check
that equation~(\ref{br17definingeqs}) is satisfied to any desired precision by 
the
parameters.
\end{remark}

\section{Tight Tensors in $\BC^3\ot \BC^3\ot \BC^3$}\label{tightsect}
 Following an analysis started in \cite{CGLVW}, we consider Kronecker squares of \emph{tight} tensors in $\bbC^3 \otimes \bbC^3 \otimes \bbC^3$. We compute their symmetry groups and numerically provide bounds to their tensor rank and border rank, highlighting the submultiplicativity properties.
 
 We refer to \cite{Strassen:AlgebraComplexity,gs005,BCS,CGLVW} for an exposition of the role of tightness in Strassen's work and in the laser method. In Lemma \ref{lemma: Tcw tight} below, we explicitly show that $T_{cw,q}$ and $T_{CW,q}$ are tight tensors. This fact was known and appears implicitly in \cite{gs005,DBLP:journals/corr/abs-1812-06952} and other related works: however we are not aware of a reference where the proof is given in its entirity.
 
\subsection{Tight tensors}
Recall the map $\Phi: GL(A) \times GL(B) \times GL(C) \to GL(A \otimes B \otimes C)$ from Section \ref{subsec: symmetry group of tensors} defining the action of $GL(A) \times GL(B) \times GL(C)$ on $A \otimes B \otimes C$. Its differential $d\Phi$ defines a map at the level of Lie algebras, mapping $\frakgl(A) \oplus \frakgl(B) \oplus \frakgl(C)$ to a subalgebra of $\frakgl(A \otimes B \otimes C)$. This subalgebrais  isomorphic to $(\frakgl(A) \oplus \frakgl(B) \oplus \frakgl(C)) / \BC^2$ where $\bbC^2 \simeq \ker d\Phi = \{ (\lambda_A \Id_A,\lambda_B \lambda \Id_B, \lambda_C \lambda \Id_C ) : \lambda_A + \lambda_B + \lambda_C = 0\}$ is the Lie algebra of the $2$-dimensional kernel of $\Phi$. Write $\frakg_T \subseteq \frakgl(A) \oplus \frakgl(B) \oplus  \frakgl(C)$ for the annihilator of $T$ under this action.

A tensor $T\in A\ot B\ot C$ is {\it tight} if $\frakg_T / \bbC^2$ contains a regular semisimple element. Given a basis $\{a_i : i = 1 \vvirg \dim A\}$ of $A$ and similarly for $B$ and $C$, write $T_{ijk}$ for the coordinates of a tensor $T$ in the induced basis $\{a_i \otimes b_j \otimes c_k\}$ of $A \otimes B \otimes C$. The support of a tensor $T \in A \otimes B \otimes C$ is 
\[
\supp(T) = \{ (i,j,k) : T_{ijk} \neq 0\}.                                                                                                                                                                                                                                                                                                                                                                               
\]
Tightness can be defined combinatorially with respect to a basis, see, e.g., \cite[Def. 1.3]{CGLVW}. Explicitly, $T$ is tight if and only if there exist bases of $A,B,C$ and injective functions $s_A : \{ 1 \vvirg \dim A\} \to \bbZ $, $s_B : \{1 \vvirg \dim B \} \to \bbZ$, $s_C : \{1 \vvirg \dim C\} \to \bbZ$ such that 
\[
 s_A(i) + s_B(j) + s_C(k) = 0 \quad \text{ for every $(i,j,k) \in \supp(T)$}.
\]
The following result was ``known to the experts'' but since we do not have a reference for it, we provide its proof. 
\begin{lemma}\label{lemma: Tcw tight}
 The tensors $T_{cw,q}$ and $T_{CW,q}$ are tight.
\end{lemma}
\begin{proof}

Write $q = 2u$ or $ q = 2u+1$ depending on the parity of $q$. Consider the change of basis 
\begin{align*}
 a_0 &\mapsto a_0 \\
 a_j &\mapsto \frac{\sqrt{2}}{2} (a_j + a_{u+j}) \quad \text{ for $j = 1 \vvirg u$}\\
 a_{u+j} &\mapsto \frac{\sqrt{-2}}{2}(a_{j} - a_{u+j}) \quad \text{ for $j =1  \vvirg u$}\\
 a_q &\mapsto a_q \quad \text{(if $q$ is odd)}\\
 a_{q+1} &\mapsto a_{q+1}
\end{align*}
and similarly on $B$ and $C$.

After this change of basis, regarding $T_{cw,q}$ and $T_{CW,q}$ as symmetric tensors in $S^3 A$, we have
\begin{align*}
 T_{cw,2u} &= a_0 \left( \textsum_{j=1}^u a_j a_{u+j} \right), \\
 T_{CW,2u} &= a_0 \left(\textsum_{j=1}^u a_j a_{u+j} \right) + a_0^2a_{q+1},\\
 \end{align*}
 or
 \begin{align*}
 T_{cw,2u+1} &= a_0 \left( \textsum_{j=1}^u a_j a_{u+j} + a_q^2\right), \\
 T_{CW,2u+1} &= a_0 \left(\textsum_{j=1}^u a_j a_{u+j} + a_q^2 \right) + a_0^2a_{q+1},
\end{align*}
depending on the parity of $q$.

Define $s = s_A=s_B=s_C$ by
\begin{align*}
 s(0) &= 2, \\
s(j) &= 2+j \quad \text{for $j=1 \vvirg u$},\\
s(u+j) &= -j-4 \quad \text{for $j=1 \vvirg u$},\\
s(q)&=-1 \quad \text{if $q$ is odd},\\
s(q+1) &= -4.
\end{align*}
It is easy to verify that $s(i)+s(j) + s(k) = 0$ if $ (i,j,k) \in \supp(T_{CW,q})$. Moreover, since $\supp(T_{cw,q}) \subseteq \supp(T_{CW,q})$, the same holds for $(i,j,k) \in \supp(T_{cw,q})$. This concludes the proof.
\end{proof}

The combinatorial characterization of tightness makes it clear that this property only depends on the support of a tensor in a given basis; we say that a support $\calS$ is tight if every tensor having support $\calS$ is tight.

A tensor $T \in A \otimes B \otimes C$ is {\it concise} if the induced linear maps $T_A : A^* \to B \otimes C$, $T_B : B^* \to A \otimes C$, $T_C: C^* \to A \otimes B$ are injective. We say that a concise tensor $T \in \bbC^m \otimes \bbC^m \otimes \bbC^m$ has {\it minimal rank} (resp. {\it minimal border rank}) if $\bfR(T) = m$ (resp. $\uR(T)=m$).

Given concise tensors $T_1 \in A_1 \otimes B_1 \otimes C_1$ and $T_2 \in A_2 \otimes B_2 \otimes C_2$, \cite[Theorem 4.1]{CGLVW} shows that 
\begin{equation}\label{tboxs}
\frakg_{T_1 \boxtimes T_2} \supseteq \frakg_{T_1} \otimes \Id_{A_2 \otimes B_2 \otimes C_2} + \Id_{A_1 \otimes B_1 \otimes C_1} \otimes \frakg_{T_2};
\end{equation}
moreover if $\frakg_{T_1} = 0$ and $\frakg_{T_2} = 0$ then equality holds $\frakg_{T_1 \boxtimes T_2} = 0$.
  
The strict containment in \eqref{tboxs} occurs, for instance, in the case of the matrix multiplication tensor. In \cite{CGLVW}, we posed the problem of characterizing tensors $T \in A \otimes B \otimes C$ such that $\frakg_T \otimes \Id_{A \otimes B \otimes C} + \Id_{A \otimes B \otimes C} \otimes \frakg_T$ is strictly contained in $\frakg_{T^{\boxtimes 2}} \subset  \frakgl(A^{\otimes 2}) +\frakgl( B^{\otimes 2}) +\frakgl(C^{\otimes 2})$.

Proposition \ref{prop: table on tight tensors} provides several additional examples of tensors in $\bbC^3 \otimes \bbC^3 \otimes \bbC^3$ for which this containment is strict.

\subsection{Tight supports in $\BC^3\ot \BC^3\ot \BC^3$}

From \cite[Proposition 2.14]{CGLVW}, one obtains an exhaustive list of unextendable tight supports for tensors in $\BC^3\ot \BC^3\ot \BC^3$, up to the action of $\bbZ_2 \times \frakS_3$, where $\frakS_3$ acts permuting the factors and $\bbZ_2$ acts by reversing the order of the basis elements. In fact, tightness is invariant under the action of the full $\frakS_3$ acting by permutation on the basis vectors. This additional simplification, pointed out by J. Hauenstein, provides the following list of $9$ unextendable tight supports up to the action of $((\frakS_3)^{\times 3}) \rtimes \frakS_3$.

\begin{align*}
\calT_{1 } &= \{{(1,1,3), (1,2,2), (2,1,2), (3,3,1)}\}; \\ 
\calT_{2 } &= \{{(1,1,3), (1,3,2), (2,3,1), (3,2,2)}\}; \\ 
\calT_{3 } &= \{{(1,1,3), (1,2,2), (1,3,1), (2,1,2), (3,2,1)}\}; \\ 
\calT_{4 } &= \{{(1,1,3), (1,2,2), (2,1,2), (2,3,1), (3,2,1)}\}; \\ 
\calT_{5 } &= \{{(1,1,3), (1,2,2), (2,3,1), (3,1,2), (3,2,1)}\}; \\ 
\calT_{6 } &= \{{(1,1,3), (1,3,2), (2,2,2), (3,1,2), (3,3,1)}\}; \\ 
\calT_{7 } &= \{{(1,1,3), (1,2,2), (1,3,1), (2,1,2), (2,2,1), (3,1,1)}\}; \\ 
\calT_{8 } &= \{{(1,1,3), (1,3,2), (2,2,2), (2,3,1), (3,1,2), (3,2,1)}\}; \\ 
\calT_{9 } &= \{{(1,2,3), (1,3,2), (2,1,3), (2,2,2), (2,3,1), (3,1,2), (3,2,1)}\}; \\ 
\end{align*}
Supports $\calS_2$ and $\calS_3$ of \cite{CGLVW} are equivalent to support $\calS_1= \calT_1$; supports $\calS_8$ and $\calS_{10}$ are equivalent to support $\calS_6 = \calT_4$.

The following result characterizes tight tensors in $\bbC^3 \otimes \bbC^3 \otimes \bbC^3$ up to isomorphism.
\begin{proposition}
Let $T \in \bbC^3 \otimes \bbC^3 \otimes \bbC^3$ be a tight tensor with unextendable tight support in some basis. Then, up to permuting the three factors, $T$ is isomorphic to exactly one of the following.
\begin{align*}
T_{1}:=&a_1\ot b_1\ot c_3+ a_1\ot b_2\ot c_2+ a_2\ot b_1\ot c_2+a_3\ot b_3\ot 
c_1
\\
T_{2}:=&a_1\ot b_1\ot c_3+ a_1\ot b_3\ot c_2+ a_2\ot b_3\ot c_1+ a_3\ot b_2\ot 
c_2
\\
T_{3}:=&a_1\ot b_1\ot c_3+ a_1\ot b_2\ot c_2+ a_1\ot b_3\ot c_1+ a_2\ot b_1\ot 
c_2+ a_3\ot b_2\ot c_1
\\
T_{4}:=&a_1\ot b_1\ot c_3+ a_1\ot b_2\ot c_2+ a_2\ot b_1\ot c_2+ a_2\ot b_3\ot 
c_1+ a_3\ot b_2\ot c_1
\\
T_{5}:=&a_1\ot b_1\ot c_3+ a_1\ot b_2\ot c_2+ a_2\ot b_3\ot c_1+ a_3\ot b_1\ot 
c_2+ a_3\ot b_2\ot c_1
\\
T_{6}:=&a_1\ot b_1\ot c_3+ a_1\ot b_3\ot c_2+ a_2\ot b_2\ot c_2+ a_3\ot b_1\ot 
c_2+ a_3\ot b_3\ot c_1
\\
T_{7}:= &a_1 \ot b_1 \ot c_3 + a_1 \ot b_2 \ot c_2 + a_1 \ot b_3 \ot c_1 + a_2 \ot b_1 \ot c_2 + a_2 \ot b_2 \ot c_1 + a_3 \ot b_1 \ot c_1
\\
T_{8}:=&a_1\ot b_1\ot c_3+ a_1\ot b_3\ot c_2+ a_2\ot b_2\ot c_2+ a_2\ot 
b_3\ot 
c_1+ a_3\ot b_1\ot c_2+ a_3\ot b_2\ot c_1
\\
T_{9,\mu}:=&a_1\ot b_2\ot c_3+ a_1\ot b_3\ot c_2+ a_2\ot b_1\ot c_3+ a_2\ot 
b_2\ot c_2+ a_2\ot b_3\ot c_1 +   a_3\ot b_1\ot c_2 \\ ~ & \hfill +  \mu \cdot a_3\ot b_2\ot c_1 \quad \text{for some $\mu \in \bbC \setminus \{0\}$}.
\\
\end{align*}
\end{proposition}
\begin{proof}
 The result of \cite[Proposition 2.14]{CGLVW} and the discussion above shows that $T$ is, up to permutation of the factors, equivalent to a tensor with support $\calT_i$ for some $i = 1 \vvirg 9$.

For $i = 1 \vvirg 8$, it is straightforward to verify that all tensors with support $\calT_i$ are isomorphic, via the change of bases given by three diagonal matrices.

The case of $\calT_{9}$ is slightly more involved but essentially the same argument shows that a tensor $T$ with support $\calT_9$ is isomorphic to $T_{9,\mu}$, for some $\mu$.

Finally, we have to show that any two of the tensors in the statement are not isomorphic. For tensors having distinct supports, this is a consequence of Proposition \ref{prop: table on tight tensors} below: indeed, if $T,T'$ are two of the tensors above, Proposition \ref{prop: table on tight tensors} shows that either $\dim \frakg_T \neq \dim \frakg_{T'}$ or $\dim \frakg_{T^{\boxtimes 2}} \neq \dim \frakg_{{T'}^{\boxtimes 2}}$.

As for the tensors with support $\calT_9$, we proceed as follows. Let $T = T_{9,\mu}$ and $T' = T_{9,\mu'}$ with $\mu \neq \mu'$. We show that $T$ is not isomorphic to $T'$. Suppose by contradiction that there is a triple of $3 \times 3$ matrices $g = (g_A,g_B,g_C) \in GL_3 \times GL_3 \times GL_3$ with $g (T) = T'$. One sees that  in each case,   $g_A,g_B,g_C$ have to be diagonal matrices, and an explicit calculation shows that there is no triple of diagonal matrices such that $g (T) = T'$.
\end{proof}

We point out that $T_{7}$ is isomorphic to the Coppersmith-Winograd tensor $T_{CW,1}$, as well as to the structure tensor of the algebra $\bbC[x]/(x^3)$.

The tensors $T_{cw,2}$ and $T_{skewcw,2}$ are degenerations of $T_{9,\mu}$, respectively for $\mu=1$ and $\mu = -1$. In particular, they do not have an unextendable tight support in some basis.

\begin{proposition} \label{prop: table on tight tensors}
For $i = 1 \vvirg 9$, the following table records $\dim \frakg_{T_i}$ and $\dim \frakg_{T_i^{\boxtimes 2}}$.
\[
\begin{array}{lccr}\toprule 
T & \dim\fg_T & \dim \fg_{T^{\boxtimes 2}} \\
\midrule 
T_{1} & 5 & 22  \\
T_{2} & 3 & 9  \\
T_{3} & 5 & 13 \\
T_{4} & 4 & 9 \\
T_{5} & 3 & 7 \\
T_{6}  & 2 & 5  \\
T_{7}    & 6 & 28 \\
T_{8}  & 1 & 2 \\
T_{9,-1}& 5 & 10   \\
T_{9,\mu} \ \ {\scriptsize \text{(for $\mu \neq 0, -1$)}}& 1 & 2 \\ 
\end{array} 
\]
In summary 
\[
\dim \frakg_{T^{\boxtimes 2}} > 2\dim \frakg_T 
\]
for tight tensors in $\bbC^3 \otimes \bbC^3 \otimes \bbC^3$ with unextendable tight supports $\calT_1 \vvirg \calT_7$.
\end{proposition}
\begin{proof}
 For $T_1 \vvirg T_8$ and for the $T_{9,-1}$, the proof follows by a direct calculation. The first part of the Macaulay2 file \texttt{symmetryTightSupports.m2} in Appendix E of the Supplementary Material computes the dimension of the symmetry algebras of interest in these cases.
 
 The second part of the file deals with the case $T_{9,\mu}$ when $\mu \neq -1$. By tightness, $\dim \frakg_{T_{9,\mu}} \geq 1$.
 
 Consider the linear map $\omega_{T_{9,\mu}} : \frakgl(A) + \frakgl(B) + \frakgl(C) \to A \otimes B \otimes C$ defined by $(X,Y,Z) \mapsto (X,Y,Z).T_{9,\mu}$. Then $\frakg_{T_{9,\mu}} = [\ker (\omega_{T_{9,\mu}})] / \bbC^2$, where $\bbC^2$ corresponds to $\ker d\Phi$.
 
 The second part of the file \texttt{symmetryTightSupports.m2} computes a matrix representation of $\omega_{T_{9,\mu}} $, depending on a parameter $\mu$ (\texttt{t} in the file). Let $F_\mu$ be this $27 \times 27$ matrix representation. Then, it suffices to select a $24 \times 24$ submatrix whose determinant is a nonzero univariate polynomial in $\mu$. If $\mu$ is a value for which $\dim \frakg_{T_{9,\mu}} > 1$, then $\mu$ has to be a root of this univariate polynomial.
 
 In the example computed in the file, we select a $24 \times 24$ submatrix whose determinant is $(\mu+1)^6\mu$, showing that the only possible values of $\mu$ for which $\dim \frakg_{T_{9,\mu}} > 1$ are $\mu = 0$ or $\mu=-1$. The case $\mu = -1$ was considered separately. The case $\mu = 0$ does not correspond to a unextendable support, so it is not of interest. We point out that however, $\rank(\omega_{T_{9,0}}) = 24$, namely $\dim \frakg_{T_{9,0}} =1$.
 
For $T_{9,\mu}^{\boxtimes 2}$, we follow essentially the same argument. By tightness, and \eqref{tboxs}, we obtain $\dim \frakg_{T_{9,\mu}^{\boxtimes 2}} \geq 2$. The third part of \texttt{symmetryTightSupports.m2} computes a matrix representation of the map $\omega_{T_{9,\mu}^{\boxtimes 2}}$, depending on a parameter $\mu$: this is a $729 \times 243$ matrix of rank at most $239$. 

In the example computed in the file, we select a $239 \times 239$ submatrix whose determinant is the univariate polynomial $\mu^8(\mu+1)^{12}$. As before, we conclude.
\end{proof}

We also provide the values of the border rank of the tensors in $\bbC^3 \otimes \bbC^3 \otimes \bbC^3$ having unextendable tight support and numerical evidence for the values of border rank of their Kronecker square. They are recorded in the following table. The values of the border rank for the $T_i$'s are straightforward to verify. The lower bounds for the Kronecker squares are obtained via Koszul flattenings. In the cases labeled by N/A the upper bounds coincide with the multiplicative upper bound; in the other cases, the upper bound is obtained via numerical methods, and the last column of the table records the $\ell_2$ distance (in the given basis) between the tensor obtained via the numerical approximation and the Kronecker square. The numerical approximations are recorded in the supplementary files in Appendix F of the Supplementary Material.
\[
\begin{array}{lccc}\toprule 
T & \ur(T) & \ur(T^{\boxtimes 2}) & \ell_2 \text{ error for  upper bound in $T^{\boxtimes 2}$ decomposition} \\
\midrule 
T_{1} & 3 & 9 & N/A \\
T_{2} & 4 & [11,14] & 0.000155951 \\
T_{3} & 4 & [11,14] &  0.00517612 \\
T_{4} & 4 & 14 & 0.0144842 \\
T_{5} & 4 & [11,15] & 0.0237172 \\
T_{6} & 4 & [11,15] & 0.00951205 \\
T_{7} & 3 & 9 & N/A \\
T_{8} & 4 & [14,16] & N/A \\
T_{9, -1}& 5 & [16,19] & 0.0231353  \\
T_{9,\mu} \ \ {\scriptsize \text{(for $\mu \neq 0,-1$)}}& 4 & [15,16] & N/A \\
\end{array} 
\]

\section{A method to compute flattenings of structured tensors}\label{section: boxparametrized}
In this section, we explain how to compute the matrices $\Phi_1 \vvirg \Phi_4$ in Section \ref{secondpf} and the matrices $\Psi_1 \vvirg \Psi_8$ in Section \ref{subsec: proof kron cube Tcw}. 

The matrices $\Phi_1 \vvirg \Phi_4$ and $\Psi_1 \vvirg \Psi_8$ arise via a series of tensor contractions of highly structured tensors. In this section, we introduce the notion of \emph{box parametrized} sequence of tensors. Lemma \ref{lemma: box parametrized under product and contraction} below shows that contraction of box parametrized tensors gives rise to box parametrized tensors; in addition, the expression of the tensors resulting from the contraction is particularly easy to control.

We will then show that the tensors in Section \ref{secondpf} and Section \ref{subsec: proof kron cube Tcw} which give rise to the matrices $\Phi_1 \vvirg \Phi_4$ and $\Psi_1 \vvirg \Psi_8$ are box parametrized. This allows us to track down the entries of the final matrices as functions of the dimension $q$. 

The full calculation of the matrices is left to the scripts available in Appendix D of the Supplementary Material.

The point of view is partially inspired by the interpretation of tensors in communication models, where a tensor on $k$ factors is regarded as a function from $\underbrace{\bbN \ttimes \bbN}_{k} \to \bbC$ with finite support sending a $k$-tuple of integers to the corresponding coefficient of the tensor. Explicitly, for every $j = 1 \vvirg k$ fix a basis $\{v_i^{(j)}\}$ on the $j$-th factor: given a finite support $\Sigma \subseteq \bbN^{\times k}$, the tensor $T = \sum_{(i_1 \vvirg i_k) \in \Sigma} t_{i_1 \vvirg i_k} v_{i_1}^{(1)} \ootimes v_{i_k}^{(k)}$ corresponds to the function defined by $T(i_1 \vvirg i_k) = t_{i_1 \vvirg i_k}$. We do not explicitly write the dimensions of the factors.

Let $\calT = \{ T_q : q \in \bbN \}$ be a sequence of tensors of order $k$. We say that $\calT$ is \emph{basic box-parametrized} if, for every $q$
\[
 T_q = p(q) \sum_{(i_1 \vvirg i_k) \in \Sigma_q} v_{i_1}^{(1)} \ootimes v_{i_k}^{(k)}
\]
where $p(q)$ is a univariate polynomial in $q$ and the support $\Sigma_q$ is defined by conditions $\eta_j q + \theta_j \leq i_j \leq H_j q + \Theta_j$ for $\eta_j,H_j \in \{0,1\}$ and $\theta_j,\Theta_j \in \bbZ_{\geq 0}$, and any number (not depending on $q$) of equalities $i_j = i_{j'}$ among indices. Without loss of generality, assume that the inequalities are sharp for every $j$, in the sense that for every $i_j$ satisfying the $j$-th inequality, the basis element $v^{(j)}_{i_j}$ does appear in $T_q$. We often say that $\calT$ is basic box-parametrized for $q \geq q_0$ for some $q_0$, in the sense that the sequence has the desired structure for $q \geq q_0$.

\begin{example}\label{example: easy box parametrized}
 The sequence $T_q = v_0^{(1)} \otimes \sum_{i=1}^q v_{i}^{(2)} \otimes v_i^{(3)}$ is basic box-parametrized for $q \geq 1$, with support $\Sigma_q$ defined by the conditions 
 \[
0 \leq i_1 \leq 0, \quad 1 \leq i_2 \leq q, \quad 1 \leq i_2 \leq q, \quad i_2 = i_3.
 \]
\end{example}

We define a contraction operation between the $j_1$-th and the $j_2$-th factor of $\calT$, obtained by summing over the corresponding indices: in other words, the contraction is the image of $T$ via the trace map $\sum u_i^{(j_1)} \otimes u_i^{(j_2)}$ applied to the $j_1$-th and $j_2$-th factors, where $\{ u_i^{(j)} \}$ is the dual basis to the fixed basis $\{ v_i^{(j)} \}$ on the $j$-th factor.

\begin{lemma}\label{lemma: box parametrized under product and contraction}
 Let $\calT$, $\calT'$ be basic box-parametrized tensors for $q \geq q_0$ and $q \geq q_0'$ respectively. Then
 \begin{itemize}
  \item $\calT \otimes \calT'$ is basic box-parametrized for $q \geq \max\{ q_0,q_0'\}$;
  \item the contraction of $\calT$ on factors $j_1$ and $j_2$ is basic box-parametrized for $q \geq \max \{ | \theta_{j_1} - \theta_{j_2} | , | \Theta_{j_1} - \Theta_{j_2}|, q_0\}$; moreover, if the univariate coefficient $p(q)$ of $\calT$ is a polynomial of degree $e$, then the coefficient of the tensor resulting from the contraction has degree at most $e+1$.
 \end{itemize}
\end{lemma}
\begin{proof}
 The first statement is immediate.
 
For the second statement, without loss of generality assume $j_1 = 1$ and $j_2 = 2$. First observe that if $\calT$ is basic box-parametrized, then summing over the first index, or equivalently applying the linear map $\sum_i u_i^{(1)}$, generates a basic box-parametrized tensor; the coefficient of this tensor has the same degree as the coefficient of $\calT$ unless the first index $i_1$ is not related by equality to any other index, and $\eta_1 =0$ and $H_1 = 1$; in the latter case, the degree of the coefficient is increased by one.

Now, contraction of $\calT$ on factors $1$ and $2$ is equivalent to first imposing the equality $i_1 = i_2$ on the support $\Phi_q$ and then summing up on the first and second index. Imposing the equality $i_1 = i_2$ effects the inequalities of $i_1$ and $i_2$ as follows:
\[
 \max\{ \eta_1 q + \theta_1, \eta_2q + \theta_2\} \leq i_1 = i_2 \leq  \min \{ H_1 q + \Theta_1, H_2q + \Theta_2\}.
\]
Each of the two bounds can be replaced by one of the two linear functions (uniformly in $q$) whenever $q \geq\{ | \theta_{1} - \theta_{2} | , | \Theta_{1} - \Theta_{2}|\}$. This, together with the previous observation, concludes the proof.
\end{proof}

Given two sequences of tensors $\calT^{(1)} ,\calT^{(2)}$ of order $k$, we define their sum as $\calT_1 + \calT_2 = \{ T_q^{(1)} + T_q^{(2)} : q \in \bbN \}$. We say that a sequence $\calT$ is \emph{box parametrized} (for $q \geq q_0$) if $\calT$ is a finite sum of basic box-parametrized sequences of tensors (for $q \geq q_0$). Observe that a sequence of tensors with constant dimensions is box parametrized if and only if its coefficients are univariate polynomials in $q$.

We will show that the maps $\Phi_1 \vvirg \Phi_4$ in the proof of Theorem \ref{cwbadnews} in Section \ref{secondpf} and the maps $\Psi_1 \vvirg \Psi_8$ in the proof of Theorem \ref{cwbadnews3} in Section \ref{subsec: proof kron cube Tcw} are box parametrized.

The scripts in Appendix D perform the contraction of box parametrized tensors according to Lemma \ref{lemma: box parametrized under product and contraction}, keeping track of the univariate polynomial coefficients and of the lower bound $q_0$ for which the expressions are valid. The final result is that the maps $\Phi_1 \vvirg \Phi_4$ are box parametrized for $q \geq 5$ and the maps $\Psi_1 \vvirg \Psi_8$ are box parametrized for $q \geq 6$.

In the following, we show that the tensors involved in the various contractions are box parametrized. Lemma \ref{lemma: box parametrized under product and contraction} guarantees that the results of the contractions are box parametrized as well.

First, notice that $T_{cw,q}$ is box parametrized for $q \geq 1$, as it is the sum of three tensors as the ones described in Example \ref{example: easy box parametrized}. By Lemma \ref{lemma: box parametrized under product and contraction}, we deduce that $T_{cw,q}^{\otimes 2}$ (regarded as a tensor of order $6$) and $T_{cw,q}^{\otimes 3}$ (regarded as a tensor of order $9$) are box parametrized. In all three cases, the polynomials defining the coefficients have degree $0$.

\subsection{Restriction}

We show that the two restriction maps $\phi_2: A^{\otimes 2} \to \bbC^3$ and $\phi_3: A^{\otimes 3} \to \bbC^5$ are box parametrized as tensors of order $3$ and $4$ respectively.

Write $\phi_2 = X_0 \otimes e_0 + X_1 \otimes e_1  + X_2 \otimes e_2$, where $\bbC^3 = \langle e_0,e_1,e_2\rangle$ and $X_0,X_1,X_2 \in {A^{\otimes 2}}^*$. It suffices to show that $X_0,X_1,X_2$ are box parametrized, regarded as tensors of order two. Using a basis dual to the basis of $A^{\otimes 2}$, we have 
\begin{align*}
 X_0 &= \alpha_0 \otimes \alpha_1 + \alpha_1 \otimes \alpha_0 + \alpha_1 \otimes \alpha_1 \\
 X_1 &= \alpha_0 \otimes \textsum_1^q \alpha_i + \textsum_1^q \alpha_i \otimes \alpha_0 \\
 X_2 &= \alpha_0 \otimes \alpha_2 + \alpha_2 \otimes \alpha_0 + \alpha_2 \otimes \alpha_1 + \alpha_3 \otimes \alpha_3.
\end{align*}
This shows that $X_0,X_1,X_2$ are box parametrized. 

Similarly, write $\phi_3 = Y_0 \otimes e_0 + \cdots + Y_4 \otimes e_4$, where $\bbC^5 = \langle e_0 \vvirg e_4\rangle$ and $Y_0 \vvirg Y_4 \in {A^{\otimes 3}}^*$. Directly from the definition in Section \ref{subsec: proof kron cube Tcw}, it is immediate that $Y_0 \vvirg Y_4$ are box parametrized and therefore $\phi_3$ is box parametrized as well.

Applying Lemma \ref{lemma: box parametrized under product and contraction}, we deduce that the two sequences $\phi_2( T_{cw,q}^{\otimes 2})$ and $\phi_3( T_{cw,q}^{\otimes 3})$ are box parametrized.

\subsection{Koszul maps}

The Koszul differentials on $\bbC^3$ and $\bbC^5$ used in the definition of the Koszul flattenings are the skew-symmetric projections $\bbC^3 \otimes \bbC^3 \to \Lambda^2 \bbC^3$ and $\Lambda^2 \bbC^5 \otimes \bbC^5 \to \Lambda^3 \bbC^5$. They are both fixed size, therefore they are box parametrized.

By Lemma \ref{lemma: box parametrized under product and contraction}, we deduce that the resulting Koszul flattenings $(\phi_2(T_{cw,q}^{\boxtimes 2}) ) ^{\wedge 1}$ and $(\phi_3(T_{cw,q}^{\boxtimes 3}) ) ^{\wedge 2}$ are box parametrized, regarded as tensors of order $6$ and $8$ respectively.

\subsection{Diagonalizing maps}

Recall that the maps $\Phi_1 \vvirg \Phi_4$ in the proof of Theorem \ref{cwbadnews} and the maps $\Psi_1 \vvirg \Psi_8$ in the proof of Theorem \ref{cwbadnews3} are the restrictions of $(\phi_2(T_{cw,q}^{\boxtimes 2}) ) ^{\wedge 1}$ and $(\phi_3(T_{cw,q}^{\boxtimes 3}) ) ^{\wedge 2}$ to the multiplicity spaces of the isotypic components for the action of $\frakS_{q-3}$ and $\frakS_{q-5}$.

We analyze the square case in detail. For the square case, let $\calM$ be the matrix of change of basis on $\bbC^q$ from the basis $\{ e_1 \vvirg e_q\}$ to the basis $\{
 e_1,e_2,e_3, \sum_4^q e_i , e_5-e_4 \vvirg e_q - e_{q-1} \}$. Explicitly
 \[
  \calM = \left[\begin{array}{ccccc}
           \Id_3 &  & & & \\
            & 1 & 1 & \cdots & 1 \\
            & -1 & 1 & & \\
            & & \ddots & \ddots & \\
            & & & -1 & 1 
          \end{array}\right].
 \]
In particular, $\calM$ diagonalizes the action of $\frakS_{q-3}$ and therefore the change of basis defined by $\Id_{\bbC^3} \boxtimes \calM^{\boxtimes 2}$ on $\bbC^3 \otimes B^{\otimes 2}$ brings the matrix representing $(\phi_2(T_{cw,q}^{\boxtimes 2}) ) ^{\wedge 1}$ into a block diagonal matrix, whose diagonal blocks are matrices representing the maps $f_i : \bbC^3 \otimes (M_i \otimes W_i) \to \Lambda^2 \bbC^3 \otimes (M_i \otimes W_i)$ from \eqref{eq: diagonal blocks square}; denote the diagonal blocks by $f^\calM_1 \vvirg f^\calM_4$.

Because of our choice of basis, the multiplicity subspaces $\bbC^3 \otimes \langle w_i \rangle \otimes M_i$ and $\Lambda^2 \bbC^3 \otimes \langle w_i \rangle \otimes M_i$ described in Section \ref{secondpf} are spanned by basis vectors, so that the matrices representing $\Phi_1 \vvirg \Phi_4$ are given by submatrices of $f^\calM _1 \vvirg f^\calM_4$. More precisely, setting $\pi_{inv},\pi_{std}$ to be the matrices of the two coordinate projections of $\bbC^q$ onto $\langle e_1 \vvirg e_4\rangle$ and $\langle e_5 \rangle$, we have 
\begin{align*}
 \Phi_1 &= (\Id_{\Lambda^2 \bbC^3} \boxtimes \pi_{inv} \boxtimes \pi_{inv}) \circ f^\calM_1 \circ (\Id_{\bbC^3} \boxtimes {\pi_{inv} \boxtimes \pi_{inv}}) ^T, \\
 \Phi_2 &= (\Id_{\Lambda^2 \bbC^3} \boxtimes \pi_{inv} \boxtimes \pi_{std}) \circ f^\calM_2 \circ (\Id_{\bbC^3} \boxtimes {\pi_{inv} \boxtimes \pi_{std}}) ^T, \\
 \Phi_3 &= (\Id_{\Lambda^2 \bbC^3} \boxtimes \pi_{std} \boxtimes \pi_{inv}) \circ f^\calM_3 \circ (\Id_{\bbC^3} \boxtimes {\pi_{std} \boxtimes \pi_{inv}}) ^T, \\
 \Phi_4 &= (\Id_{\Lambda^2 \bbC^3} \boxtimes \pi_{std} \boxtimes \pi_{std}) \circ f^\calM_4 \circ (\Id_{\bbC^3} \boxtimes {\pi_{std} \boxtimes \pi_{std}}) ^T. \\
\end{align*}
Since the composition can be performed on the single factors, by Lemma \ref{lemma: box parametrized under product and contraction} it suffices to show that the four matrices $\calM^{-1} \circ \pi_{inv}^T$, $\calM^{-1} \circ \pi_{std}^T$, $\pi_{inv} \circ \calM$ and $\pi_{std} \circ \calM$ are box parametrized.

From the structure of $\calM$, it is clear that $\pi_{inv} \circ \calM$ and $\pi_{std} \circ \calM$ are box parametrized. The computation of $\calM^{-1}$ is straightforward, and it is easy to see that $\calM^{-1} \circ \pi_{inv}^T$, $\calM^{-1} \circ \pi_{std}^T$ are box parametrized.

This shows that $\Phi_1 \vvirg \Phi_4$ are box parametrized. The script available in Appendix D computes the box parametrized representation of $\Phi_1 \vvirg \Phi_4$ starting from the box parametrized version of $T_{cw}$, the restriction map $\phi_2$, the Koszul differential and the four matrices $\calM^{-1} \circ \pi_{inv}^T$, $\calM^{-1} \circ \pi_{std}^T$, $\pi_{inv} \circ \calM$ and $\pi_{std} \circ \calM$.

The cube case is similar. Now, restriction space $\bbC^3$ is a $\bbC^5$, the top left block in the matrix $\calM$ is a $5 \times 5$ identity block, the result of the conjugation by $\calM$ is block diagonal with $8$ blocks, corresponding to the eight isotypic components. The coordinate projections $\pi_{inv}$ and $\pi_{std}$ are onto $\langle e_1 \vvirg e_6 \rangle$ and $\langle e_7\rangle$. The script computes the box parametrized representation of the matrices $\Psi_1 \vvirg \Psi_8$.

\subsection*{Acknowledgements}
Landsberg supported by NSF grant AF-1814254. Gesmundo acknowledges support from VILLUM FONDEN via the QMATH Centre of Excellence (Grant no. 10059). Ventura is supported by Vici Grant 639.033.514 of Jan Draisma from the  Netherlands Organisation for Scientific Research. We thank  the anonymous referees for their careful reading of the paper and useful suggestions.

\bibliographystyle{amsalpha}
\bibliography{kronbib}

\end{document}